\documentclass[12pt]{article}

\usepackage{hyperref}
\usepackage{doi}
\usepackage[margin=0.8in]{geometry}
\usepackage{tikz-cd}
\usepackage[shortlabels]{enumitem}
\usepackage{xcolor}
\usepackage{bbold}

\usepackage[backend=biber, style=numeric, natbib=true, date=year, giveninits=true]{biblatex}
\AtEveryBibitem{\clearfield{issn}\clearfield{url}\clearfield{urlyear}\clearfield{urlmonth}\clearfield{urlday}\clearlist{language}}

\usepackage{amsmath, amssymb, amsthm, mathtools}
\usepackage{cmll}

\def\states{\mathfrak S}
\def\Ne{\mathcal N}
\def\Me{\mathcal M}
\def\Ee{\mathcal E}
\def\Fe{\mathcal F}
\def\Le{\mathcal L}
\def\Je{\mathcal J}
\def\Be{\mathcal B}
\def\Ae{\mathcal A}
\def\Se{\mathcal S}

\def\<{\langle}
\def\>{\rangle}
\def\Tr{\mathrm{Tr}}
\def\supp{\mathrm{supp}}

\newtheorem{lemma}{Lemma}
\newtheorem{theorem}{Theorem}
\newtheorem{coro}{Corollary}
\newtheorem{prop}{Proposition}

\newtheorem{introtheorem}{Theorem} 

\theoremstyle{definition}

\theoremstyle{remark}
\newtheorem{remark}{Remark}

\title{Sufficient positive maps between von Neumann algebras: R{\'e}nyi divergences}
\author{Anna Jen\v cov\'a$^1$, Lauritz van Luijk$^{2,3}$}

\date{\footnotesize{
$^1$Mathematical Institute of the Slovak Academy of Sciences, Bratislava, Slovakia\\
$^2$Perimeter Institute for Theoretical Physics, Waterloo, Ontario, Canada\\
$^3$Institute for Quantum Computing, Waterloo, Ontario, Canada\\
\today}}

\begin{document}

\maketitle

\begin{abstract}
We prove recovery theorems for $\alpha$-$z$ R\'enyi divergences under normal unital positive maps between von Neumann algebras. 
Previous results in this setting required 2-positivity, while recent finite-dimensional work showed that this assumption can be relaxed to mere positivity. Our proof uses sufficiency for JW*-subalgebras and $L^p$-spaces over them to characterize equality in the data processing inequality. 
As a further application of our methods, we solve a problem discussed by Haagerup and Størmer: Every conditional expectation of a von Neumann algebra onto a JW*-subalgebra factors through a conditional expectation onto the generated von Neumann subalgebra.
\end{abstract}

\tableofcontents

\null

\section{Introduction}

A core principle of quantum information theory and statistics is the data processing inequality, asserting that it is impossible for information-processing to increase how well a pair of states can be distinguished.
Distinguishability of states is quantified using divergences such as the quantum relative entropy or the various families of quantum R\'enyi divergences, each capturing different aspects of distinguishability.
In this paper we are concerned with quantum R\'enyi divergences.
More specifically, we consider the family of $\alpha$-$z$ quantum R\'enyi divergences
\cite{audenaert_-z-renyi_2015,hiai2024alphaz,kato2024onrenyi}, which contains
several established quantities, such as the sandwiched and  Petz quantum R\'enyi
divergences, see \cite{hiai2021quantum} for an overview of these special cases in the von Neumann algebraic setting.

It was famously shown by Petz \cite{petz_sufficiency_1988} that equality in the data processing inequality of the relative entropy holds if and only if the input states can be recovered from the output states.
Petz proved this for states on general von Neumann algebras with state transitions modeled by 2-positive maps.
Analogous recovery theorems have later been shown for various other families of
divergences \cite{jencova_reversibility_2012,jencova2017preservation,hiai2017different}.
In \cite{hiai2024alphaz}, it was shown to hold for the full family of $\alpha$-$z$ quantum R\'enyi divergences with state transitions modeled by $2$-positive maps between general von Neumann algebras. 

Until rather recently the data processing inequality of many important divergences was not known to hold for general positive maps.
This changed substantially in recent years
\cite{mullerhermes2017monotonicity,frenkel_integral_2023,hiai2024alphaz}.
For most divergences the data processing inequality is now known to hold for positive maps (see, e.g., \cite[Appendix E]{galke_sufficiency_2024} for a discussion), and there is no known example of a divergence that satisfies the data processing inequality for quantum channels (completely positive maps), but not for merely positive maps.
This poses the question whether the recovery theorems require 2-positivity, or whether they might hold under the minimal assumption of mere positivity?
Our work fits into a line of research on positive maps as generalized transformations of quantum statistical models \cite{alberti_problem_1980,mullerhermes2017monotonicity,galke_sufficiency_2024,frenkel_integral_2023,luijk2026sufficiency,luijk_quantum_2026}.

For states on finite-dimensional matrix algebras, recovery theorems for positive maps were finally shown for both the quantum relative entropy and the $\alpha$-$z$ quantum R\'enyi divergences in the paper \cite{luijk2026sufficiency}, which introduces Jordan algebraic techniques to handle positive maps in the same way that 2-positive maps are handled in earlier works. 
In this work, we focus on the case of R\'enyi divergences and prove recovery theorems for
positive maps between general von Neumann algebras by generalizing the argument in
\cite{luijk2026sufficiency}, which in turn rests on results developed in
\cite{jencova2018renyi,jencova2021renyi,hiai2024alphaz}. 

The $\alpha$-$z$ quantum R\'enyi divergences only define sensible divergences if the data processing inequality holds.
The region $R_{\text{DPI}}$ of pairs $(\alpha,z)$ for which this is the case was
characterized in \cite{zhang2020fromwyd}, and \cite{hiai2024alphaz} shows that the same
parameter region applies to the von Neumann algebraic setting, and the data processing inequality holds for general positive maps.
We denote by $R_{\text{rec}}\subset R_{\text{DPI}}$ the region for which the recovery
theorem holds for 2-positive maps, determined in \cite{hiai2024alphaz} (see Remark  \ref{rem:region} below, noting that this is indeed the full recovery region). We discuss this region explicitly in Section \ref{sec:recovery}.

To state our main result in a concise way, we assume faithfulness of all states involved
(this assumption will be dropped in the main part of the paper). 
In this case, the
recovery region only excludes the boundary points $\alpha=z=\frac12$ and $\alpha=z+1$ of $R_{\text{DPI}}$.

\begin{introtheorem}\label{introthm:main}
    Let $T:\Ne \to \Me$ be a normal unital positive map between von Neumann algebras $\Me$ and $\Ne$.
    Let $\rho, \omega$ be faithful normal states on $\Me$ such that $\rho\circ T, \omega\circ T$ are faithful states on $\Ne$.
    Let $(\alpha,z)\in R_{\text{rec}}$. 
    The following are equivalent:
    \begin{enumerate}
        \item Equality $D_{\alpha,z}(\rho\circ T\|\omega\circ T) = D_{\alpha,z}(\rho\|\omega)$,
        \item $T$ is sufficient for $\{\rho,\omega\}$: there exists a normal unital positive map $R:\Me\to \Ne$ such that $\rho\circ T\circ R=\rho$, $\omega\circ T\circ R = \omega$.
    \end{enumerate}
\end{introtheorem}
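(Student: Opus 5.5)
The direction (ii)$\Rightarrow$(i) is immediate. Applying the data processing inequality for normal unital positive maps (valid on all of $R_{\text{DPI}}\supset R_{\text{rec}}$ by \cite{hiai2024alphaz}) first to $T$ and then to $R$ gives
\[
D_{\alpha,z}(\rho\|\omega)=D_{\alpha,z}(\rho\circ T\circ R\|\omega\circ T\circ R)\le D_{\alpha,z}(\rho\circ T\|\omega\circ T)\le D_{\alpha,z}(\rho\|\omega),
\]
so equality holds throughout. The real work is (i)$\Rightarrow$(ii), and for it I would adapt the finite-dimensional argument of \cite{luijk2026sufficiency} to von Neumann algebras. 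In that argument one replaces the multiplicative domain and the Stinespring and Schwarz machinery, which need 2-positivity, by Jordan-algebraic analogues. These are the Jordan multiplicative domain of $T$ and the Kadison--Schwarz inequality $T(a^2)\ge T(a)^2$ for self-adjoint $a$, which holds for every unital positive map.

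\textbf{Step 1 (equality condition in $L^p$).} Following \cite{jencova2018renyi,jencova2021renyi,hiai2024alphaz}, write $D_{\alpha,z}$ through the Kosaki/Haagerup interpolation $L^p$-norm of an element built from $\rho$ and relative to $\omega$. One can also use the variational formula, with $p=\alpha/z$ and the conjugate exponent. The DPI comes from a contraction $T_\omega^{(p)}$ between these $L^p$-spaces induced by the Petz dual of $T$ with respect to $\omega$. For this I would take as given the fact, used in \cite{hiai2024alphaz}, that the dual maps of a positive unital $T$ induce contractions between the corresponding $L^p$-spaces. Equality in (i) forces the norm to be attained. In the interior of $R_{\text{rec}}$ (excluding $\alpha=z=\tfrac12$ and $\alpha=z+1$), uniform convexity of $L^p$ with $1<p<\infty$, or the strict concavity/convexity of the variational expression, then shows that the optimizer is preserved. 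Concretely, some density $h$ attached to $\rho\circ T$ is mapped by $T$ or its Petz dual to the corresponding density for $\rho$, and the map is isometric on it.

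\textbf{Step 2 (Jordan algebra and sufficiency).} From this isometric equality I would extract, via the equality case of Kadison--Schwarz, that suitable real functions of the Connes cocycle or relative modular data, for example $(D\rho\circ T:D\omega\circ T)$-type self-adjoint elements or the operators $h^{it}$ replaced by self-adjoint functional calculus, lie in the Jordan multiplicative domain of the Petz dual $T^*_\omega$. That domain is a JW*-algebra. I then take the JW*-subalgebra $\Je\subset\Ne$ generated by these elements. The point is that the restriction of $T$ to $\Je$ is a Jordan homomorphism that intertwines the states. By the sufficiency theory for JW*-subalgebras developed earlier in the paper, $\Je$ is sufficient for $\{\rho\circ T,\omega\circ T\}$. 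So there is a faithful-state-preserving conditional expectation onto $\Je$, or a positive recovery map, and composing it with the Jordan inverse produces $R:\Me\to\Ne$ with $\rho\circ T\circ R=\rho$ and $\omega\circ T\circ R=\omega$. Faithfulness of the four states is used here to make the Petz duals and the modular objects well defined.

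\textbf{Main obstacle.} The hardest step is the passage in Step 2 from equality of $L^p$-norms to membership in the Jordan multiplicative domain. In the 2-positive case this uses the Schwarz inequality for $T(a^*a)$ on non-self-adjoint elements, which is not available here. My first attempt would be to work with self-adjoint real parts and the complex interpolation of $L^p$-spaces over JW*-algebras, using the results on $L^p$ over JW*-subalgebras mentioned in the abstract. I would apply the Kadison--Schwarz equality case pointwise along the interpolation strip, and then use analyticity to propagate Jordan multiplicativity to the generating elements.
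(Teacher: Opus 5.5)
Your direction (ii)$\Rightarrow$(i) is correct. The direction (i)$\Rightarrow$(ii), however, has a genuine gap, located exactly at what you call the main obstacle.

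It helps to be precise about what Step 1 actually delivers. The argument imported from \cite{hiai2024alphaz} does not give a recovery statement for $\rho$; it gives one for an auxiliary functional. For $\alpha<1$, set $p=z/\alpha$, $r=z/(1-\alpha)$, and define $\mu,\sigma$ by $h_\mu^{1/z}=h_\rho^{\frac1{2p}}h_\omega^{\frac1r}h_\rho^{\frac1{2p}}$ and $h_\sigma=h_\rho^{\frac{p-1}{2p}}h_\mu^{\frac1p}h_\rho^{\frac{p-1}{2p}}$. Equality in the data processing inequality then yields $\sigma\circ T\circ T_\rho=\sigma$. For $\alpha>1$ one gets the analogous identity with $T_\omega$. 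This already says that $T$ is sufficient for the auxiliary pair, with the Petz map as recovery map, so no multiplicative-domain argument is needed at that stage.

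What is missing is a way to transport sufficiency from the auxiliary pair back to $\{\rho,\omega\}$, and Step 2 does not supply one:
\begin{itemize}
\item The equality case of Kadison--Schwarz concerns bounded self-adjoint elements of $\Ne$. The equality produced in Step 1 concerns unbounded $L^p$-densities.
\item Connes cocycles are unitaries built from products of non-commuting factors, and a Jordan multiplicative domain is not closed under ordinary products. Even if some ``real functions'' of them lie in it, you give no argument that the densities of $\rho\circ T$ and $\omega\circ T$ belong to (the $L^1$-space over) the JW*-algebra they generate.
\item The inference ``$T|_{\Je}$ is a Jordan homomorphism intertwining the states, hence $\Je$ is sufficient'' is not valid. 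Sufficiency requires a state-preserving normal unital positive projection onto $\Je$, i.e.\ invariance of the states under a conditional expectation onto $\Je$.
\item ``Composing with the Jordan inverse'' presupposes a $\{\rho,\omega\}$-preserving conditional expectation of $\Me$ onto $T(\Je)$. That is essentially the conclusion you are trying to reach.
\end{itemize}

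The paper closes this gap with Lemma~\ref{lemma:suff_alpha}. It states that if $h_\rho^{\frac1r}=h_\omega^{\frac1{2q}}h_\mu^{\frac1p}h_\omega^{\frac1{2q}}$, then $T$ is sufficient for $\{\mu,\omega\}$ if and only if it is sufficient for $\{\rho,\omega\}$. The proof runs as follows.
\begin{enumerate}
\item Take the faithful conditional expectation $E$ onto the minimal sufficient JW*-subalgebra of one of the pairs. This comes from the mean ergodic theorem, Proposition~\ref{prop:min_suf}.
\item Closure of $L^p(\Je,\omega)$ under Jordan triple products (item (c) of Lemma~\ref{lemma:lpJ}) gives one direction. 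The cancellation property (item (d)) gives the other. Together they show that the sandwiched functional of the other pair is also $E$-invariant.
\item Corollary~\ref{coro:suff_ext}, which rests on universality of the Petz map (Proposition~\ref{prop:universal}), then transfers sufficiency from one pair to the other.
\end{enumerate}
Proving (c) and (d) is where the real work lies. One factors $E$ through $\Je''$ (Corollary~\ref{coro:positive_projections}) and treats the reversible and type $I_2$ summands separately, using the Haagerup--St{\o}rmer classification. Applying this lemma twice to the output of Step 1 finishes the proof. This transfer of invariance through $L^p$-spaces over the minimal sufficient JW*-subalgebra, rather than a Jordan multiplicative-domain argument, is the idea your proposal lacks.
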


As in \cite{luijk2026sufficiency}, our proof strategy is to follow the arguments in
\cite{hiai2024alphaz} closely and to use Jordan algebraic techniques to take care of the
parts where \cite{hiai2024alphaz} uses the 2-positivity assumption.
More precisely, following \cite{luijk2026sufficiency}, we consider sufficiency of JW*-subalgebras $\Ne\subset \Me$ of von Neumann algebras, defined through the existence of a state-preserving normal unital positive map $T:\Me\to \Ne$.
A JW*-subalgebra is an ultraweakly closed, *-invariant, unital subspace of a von Neumann algebra that is closed under the Jordan product $(a,b)\mapsto \frac12(ab+ba)$.
For von Neumann subalgebras, this notion of sufficiency is shown to coincide with the traditional one \cite{luijk2026sufficiency,jencova2006sufficiencyfirst}, where $T$ is demanded to be 2-positive or, equivalently, completely positive.
As is the case for sufficiency of von Neumann algebras, there exists a minimal sufficient JW*-subalgebra $\Je_{\{\rho,\omega\}}$, which admits a state-preserving conditional expectation.
Here, a conditional expectation onto a JW*-subalgebra $\Je \subset \Me$ means a normal unital positive projection onto $\Je$, satisfying a Jordan version of the usual bimodule property.
The known fact that a conditional expectation is completely positive if and only if its range is a von Neumann subalgebra implies that a von Neumann subalgebra is sufficient in the sense of positive maps if and only if it is sufficient in the sense of completely positive maps.
Moreover, it follows that the minimal sufficient von Neumann subalgebra $\Ae_{\{\rho,\omega\}}$ is the von Neumann algebra generated by the minimal sufficient JW*-subalgebra $\Je_{\{\rho,\omega\}}$, i.e.,
\begin{equation}
    \Ae_{\{\rho,\omega\}} = \Je_{\{\rho,\omega\}}''.
\end{equation}
These results are not restricted to the binary case, instead they apply to general faithful statistical experiments $\Ee$ (faithfulness means that the family of states is faithful, not that each state $\omega\in\Ee$ is faithful).

A further ingredient in our proofs is a suitable notion of $L^p$-spaces over JW*-subalgebras $\Je\subset \Me$. 
In contrast to previous works
\cite{ayupov1984integration,arhancet_spectral_2026,arhancet_completely_2026,arhancet_positive_2023},
we do not work with abstract $L^p$-spaces associated with $\Je$. 
Instead, we use a conditional expectation onto $\Je$ to realize the $L^p$-spaces as subspaces of the Haagerup $L^p$-spaces of $\Me$.
We establish the basic properties needed for our arguments, including their behavior under multiplication and conditional expectations. 
As in \cite{luijk2026sufficiency}, these results allow us to adapt $L^p$-space techniques
previously used in the 2-positive setting \cite{hiai2024alphaz,jencova2018renyi,jencova2021renyi} to merely positive maps.

Next, we explain how the theory of sufficient subalgebras gives a complete and elegant answer to a question raised and partially answered by Haagerup and Størmer in \cite{stormer_positive_1963}.
The question is: 
Does every faithful conditional expectation $E: \Me \to \Je$ onto a JW*-subalgebra $\Je$ of a von Neumann algebra $\Me$ factor through a conditional expectation onto the von Neumann subalgebra $\Je''$ generated by $\Je$?
Using the classification of JW*-algebras and their representations, they were able to settle this affirmatively in some special cases.
If one considers the family $\Ee$ of $E$-invariant states, it is easy to see that the minimal sufficient JW*-subalgebra is $\Je_\Ee=\Je$ and, therefore, the minimal sufficient von Neumann subalgebra is $\Je''$.
Thus, there exists a conditional expectation $F$ of $\Me$ onto $\Je''$ preserving every $E$-invariant state.
Hence, $E|_{\Je''} \circ F$ is a conditional expectation onto $\Je$.
Since there can be at most one conditional expectation preserving a fixed faithful normal state, this conditional expectation equals $E$.
Therefore, we have:

\begin{introtheorem}\label{introthm:CE}
    Let $\Me$ be a $\sigma$-finite von Neumann algebra.
    Let $E: \Me \to \Je$ be a faithful conditional expectation onto a JW*-subalgebra $\Je \subset \Me$.
    Then $E$ factorizes through a faithful conditional expectation $F: \Me \to \Je''$ onto the generated von Neumann algebra, i.e.,
    \begin{equation}
        E = E|_{\Je''} \circ F.
    \end{equation}
\end{introtheorem}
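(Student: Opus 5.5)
The plan is to follow the sketch given in the introduction, with the sufficiency theory for JW*-subalgebras as the main input. Let $\Ee$ be the family of all $E$-invariant normal states on $\Me$, i.e.\ those $\omega$ with $\omega\circ E=\omega$. Since $\Me$ is $\sigma$-finite and $E$ is faithful, we can fix a faithful normal state $\varphi_0$ on $\Je$ and put $\varphi=\varphi_0\circ E$. This $\varphi$ is a faithful normal $E$-invariant state, so $\Ee$ is a faithful statistical experiment. The inclusion $\Je\subset\Me$ together with $E$ shows that $\Je$ is sufficient for $\Ee$ in the JW* sense: $E$ is a normal unital positive map onto $\Je$ preserving every $\omega\in\Ee$.

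The first step is to show that $\Je$ is the minimal sufficient JW*-subalgebra, $\Je_\Ee=\Je$. Minimality gives $\Je_\Ee\subset\Je$. For the reverse inclusion I would use the characterization of $\Je_\Ee$ as generated by the relevant Radon--Nikodym-type data, or equivalently the fact that a sufficient JW*-subalgebra must contain every element whose "expectation values" separate the states in $\Ee$. Concretely, for each positive $a\in\Je$ with $\varphi_0(a)=1$, the state $\omega_a=\varphi_0(a^{1/2}E(\cdot)a^{1/2})$ lies in $\Ee$, because $E$ is a Jordan-bimodule projection onto $\Je$ and so $E(a^{1/2}xa^{1/2})=a^{1/2}E(x)a^{1/2}$. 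Any sufficient JW*-subalgebra $\Be\subseteq\Je$ with state-preserving conditional expectation $E_\Be$ must then satisfy $\omega_a\circ E_\Be=\omega_a$ for all such $a$. Combined with uniqueness of the $\varphi$-preserving conditional expectation, this forces $E_\Be|_\Je=\mathrm{id}$, hence $\Be=\Je$.

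This step is the one I expect to be the main obstacle. If the paper has a Takesaki-type statement (a JW*-subalgebra is sufficient for a faithful family iff it is invariant under the relevant modular data and contains the relevant densities), I would invoke that instead.

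The second step applies the earlier result that the minimal sufficient von Neumann subalgebra is generated by the minimal sufficient JW*-subalgebra: $\Ae_\Ee=\Je_\Ee''=\Je''$. Since a von Neumann subalgebra is sufficient in the positive sense iff it is sufficient in the completely positive sense, and the minimal sufficient subalgebra admits a state-preserving conditional expectation, there is a normal conditional expectation $F:\Me\to\Je''$ with $\omega\circ F=\omega$ for all $\omega\in\Ee$. In particular $\varphi\circ F=\varphi$, so $F$ is faithful.

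Finally, set $E'=E|_{\Je''}\circ F$. This is normal, unital and positive with range $\Je$, and it restricts to the identity on $\Je$. It satisfies the Jordan bimodule property, because $F$ is a $\Je''$-bimodule map and $E|_{\Je''}$ is a Jordan $\Je$-bimodule map. It also preserves $\varphi$, since $\varphi\circ E|_{\Je''}\circ F=\varphi\circ F=\varphi$. By the uniqueness of a conditional expectation onto $\Je$ preserving a fixed faithful normal state (the Jordan analogue of Takesaki's uniqueness: both are the $\varphi$-orthogonal projection in the GNS space), we get $E'=E$, which is the claimed factorization.
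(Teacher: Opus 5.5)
Your overall route is the paper's. The experiment of all $E$-invariant normal states is exactly the paper's $\Se_E=\{\rho\circ E:\rho\in\states(\Me)\}$; you identify $\Je$ as its minimal sufficient JW*-subalgebra and read off $F$ from Theorem~\ref{thm:minsuf}. Your last step, using uniqueness of the $\varphi$-preserving conditional expectation, is fine. The paper instead gets $E\circ F=E$ directly from $F\in\Le$ and the ergodic property of the minimal idempotent; both work.

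The gap is in the step you flagged. The inference ``combined with uniqueness of the $\varphi$-preserving conditional expectation, this forces $E_\Be|_\Je=\mathrm{id}$'' is a non sequitur. Uniqueness compares two $\varphi$-preserving expectations with the \emph{same} range, but $E_\Be$ and $E$ have ranges $\Be$ and $\Je$, and their equality is exactly what is at stake. What you actually need is that the $E$-invariant states separate the points of $\Je$. The paper does this in one line. Let $E'$ be the minimal idempotent of $\Le$, with range $\Je_\Ee$. Since $E\in\Le$, we have $E\circ E'=E'$, and $E'$ preserves every $\rho\circ E$. Hence
\[
\rho\circ E=\rho\circ E\circ E'=\rho\circ E'\qquad\text{for all }\rho\in\states(\Me),
\]
so $E'=E$ and $\Je_\Ee=\Je$.

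Your own family $\omega_a$ would also suffice, but only after an argument you did not give. For self-adjoint $y\in\Je$ (which suffices), put $z=E_\Be(y)-y\in\Je$. Then $\omega_a\circ E_\Be=\omega_a$ gives $\varphi_0(a^{1/2}za^{1/2})=0$ for all $a\in\Je^+$. Choose $a$ to be the spectral projection of $z$ for $(0,\infty)$, which lies in $\Je$. This gives $\varphi_0(z_+)=0$, so $z_+=0$ by faithfulness of $\varphi_0$, and likewise $z_-=0$. With this step repaired, the rest of your proof goes through.
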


This result, and the ensuing general structure of faithful conditional expectations on
$\sigma$-finite von Neumann algebras, interesting in its own right, provides one of the
crucial tools for this work.

\paragraph{Acknowledgements.}
We thank Henrik Wilming and Jonas Bettermann for many helpful discussions. 
This work was supported by the Swedish Research Council under grant no.\ 2021-06594 while the authors were at Institut Mittag-Leffler in Djursholm, Sweden, during the 2026 program on Operator Algebras and Quantum Information.
AJ was supported by the grant VEGA 2/0128/24. 
Research at Perimeter Institute and the University of Waterloo is supported in part by the
Government of Canada through the Department of Innovation, Science and Economic
Development and by the Province of Ontario through the Ministry of Colleges and
Universities.

\section{Definitions}

For a von Neumann algebra $\Me$, we will denote  by $\Me_*$ the predual of $\Me$  and
by $\Me_*^+$ its positive part. For $0<p\le \infty$, we denote by $L^p(\Me)$ the Haagerup
$L^p$-space over $\Me$ \cite{haagerup1979lpspaces}. 
We refer to \cite{terp1981lpspaces}
and \cite[Sec.~9]{hiai2021lectures} for the necessary definitions and properties. The cone
of positive elements will be denoted by $L^p(\Me)^+$ and the real subspace of self adjoint
elements by $L_h^p(\Me)$.

For any normal functional $\varphi\in \Me_*$, we will denote by $h_\varphi$ the operator in
$L^1(\Me)$ corresponding to $\varphi$ under the linear and order isomorphism $\Me_*\simeq
L^1(\Me)$. The trace functional in $L^1(\Me)$ will be denoted by $\Tr$, that is, $\Tr
h_\varphi=\varphi(1)$, $\varphi\in \Me_*$.

\subsection{Normal unital positive maps and their Petz duals}

We consider normal unital positive (NUP) maps $T:\Ne\to \Me$ between von Neumann algebras $\Me$ and $\Ne$.
Under the identification of the preduals $\Me_*$, $\Ne_*$ with the Haagerup $L^1$-spaces $L^1(\Me)$, $L^1(\Ne)$, the preadjoint of an NUP map is a positive trace-preserving map $T_*:L^1(\Me)\to L^1(\Ne)$.
The map $T$ is  completely positive if and only if the predual $T_*$ is.
We abbreviate completely positive normal unital positive maps as NUCP maps.



Let $T:\Ne\to \Me$ be an NUP map and let $\omega$ be a faithful normal state. 
Let $\omega_0:=\omega\circ T$ and assume that $\omega_0$ is faithful as well. 
If $\omega_0$ is not faithful, we can restrict our attention to the corner of the support projection to obtain a faithful state. 
Let $p>0$, then the map
\begin{equation}\label{eq:Tp}
h_{\omega_0}^{\frac{1}{2p}}ah_{\omega_0}^{\frac{1}{2p}}\mapsto
h_\omega^{\frac1{2p}}T(a)h_\omega^{\frac1{2p}},\qquad a\in \Ne
\end{equation}
extends to a linear map $T_p: L^p(\Ne)\to L^p(\Me)$. By the identification
$L^\infty(\Me)=\Me$, we may put $T_\infty:=T$. 
\begin{prop}\label{prop:contraction}\cite[Lemma 3.1]{hiai2024alphaz} For $p\ge 1$, $T_p$ is a contraction.

\end{prop}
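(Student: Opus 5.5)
We prove the bound at the endpoints $p=1$ and $p=\infty$ and interpolate in between, using the complex interpolation structure of Haagerup $L^p$-spaces with respect to a faithful state (Kosaki's theorem). For $0\le\theta\le 1$ we write $1/p=1-\theta$, so $p=1$ corresponds to $\theta=0$ and $p=\infty$ to $\theta=1$.

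\textbf{Endpoints.} For $p=\infty$ there is nothing to prove: $T_\infty=T$ is a unital positive map, so $\|T\|=\|T(1)\|=1$ by the Russo--Dye theorem, and hence $T$ is a contraction. For $p=1$ the map is
\[
h_{\omega_0}^{1/2}ah_{\omega_0}^{1/2}\mapsto h_\omega^{1/2}T(a)h_\omega^{1/2}.
\]
We identify it with the preadjoint of the Petz dual of $T$. Positivity is preserved, because $a\ge0$ implies $T(a)\ge0$. The trace is preserved, because
\[
\Tr h_\omega^{1/2}T(a)h_\omega^{1/2}=\omega(T(a))=\omega_0(a)=\Tr h_{\omega_0}^{1/2}ah_{\omega_0}^{1/2}.
\]
The elements $h_{\omega_0}^{1/2}ah_{\omega_0}^{1/2}$, $a\in\Ne$, are dense in $L^1(\Ne)$ because $\omega_0$ is faithful. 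A map that is positive and trace-preserving on this dense subspace has norm at most one on self-adjoint elements. To see this, approximate the Jordan decomposition $x=x_+-x_-$ by positive elements of the dense cone (possible by faithfulness) and use $\|y\|_1=\Tr y$ for $y\ge0$. For general elements this loses at most a factor of $2$, and we recover norm one via the standard fact that positive maps of $L^1$ spaces are contractive when they are trace-preserving on positives. This last fact is the duality $\|T_1\|=\|(T_1)^*\|$, where $(T_1)^*$ is a unital positive map on $\Ne\to$ von Neumann algebra level. Since $T_1$ is positive, its dual is positive and unital (trace preservation), so Russo--Dye gives norm one. Alternatively, and more simply, we check that $T_1$ is exactly the predual $(T^\sharp)_*$ of the normal unital positive Petz recovery map $T^\sharp:\Me\to\Ne$. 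Its boundedness then follows from that of $T^\sharp$.

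\textbf{Interpolation.} Kosaki's interpolation theorem gives
\[
L^p(\Me)\cong C_\theta\bigl(\Me,L^1(\Me)\bigr)
\]
through the symmetric embedding $a\mapsto h_\omega^{1/2}ah_\omega^{1/2}$, with exponent $1/p=1-\theta$ as above. Under this embedding the element of $L^p(\Me)$ corresponding to $a$ is $h_\omega^{1/2p}ah_\omega^{1/2p}$, with equal norms. The same holds for $\Ne$ with $\omega_0$. By construction, $T_p$ is precisely the map induced by the single compatible map $a\mapsto T(a)$ on the compatible couples: at the $\infty$-end it is $T$, and at the $1$-end it is $T_1$. Riesz--Thorin for complex interpolation then yields
\[
\|T_p\|\le\|T\|^{\theta}\,\|T_1\|^{1-\theta}\le 1.
\]

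\textbf{Main obstacle.} The delicate step is making the interpolation rigorous. One must check that the embeddings are compatible couples and that the map on the intersection ($\Ne$ sitting inside $L^1(\Ne)$) is consistent at both ends. One must also verify that $T_p$, defined on the dense subspace in \eqref{eq:Tp}, agrees with the interpolated operator and therefore extends continuously. Faithfulness of $\omega_0$ gives this density.
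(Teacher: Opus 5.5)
Your proof is correct and matches the standard argument behind the cited \cite[Lemma 3.1]{hiai2024alphaz}: Kosaki's symmetric interpolation between $p=\infty$, where $\|T\|=\|T(1)\|=1$, and $p=1$, where the identity $\Tr(h_\omega^{1/2}T(a)h_\omega^{1/2}b)=\Tr(h_{\omega_0}^{1/2}ah_{\omega_0}^{1/2}T_\omega(b))$ identifies $T_1$ with the predual of the NUP Petz dual $T_\omega$ (whose existence in \cite{jencova2018renyi} uses only positivity of $T_*$ and a Radon--Nikodym argument, so nothing is circular), giving $\|T_1\|=\|T_\omega(1)\|=1$. Discard your first attempt at the $p=1$ endpoint, which is circular because both the approximation of $x_\pm$ and the duality $\|T_1\|=\|T_1^*\|$ presuppose that $T_1$ is already bounded on $L^1(\Ne)$; also, Kosaki's theorem reads $L^p(\Me)\cong C_{1/p}(\Me,L^1(\Me))$, so your parameter $\theta=1-1/p$ is attached to the wrong endpoint (harmless here, since both endpoint norms are at most $1$).
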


By the results in \cite{jencova2018renyi}, there exists an NUP map $T_\omega: \Me\to \Ne$ such that
\[
T_*(h^{\frac12}_\omega x h^{\frac12}_\omega)= h_{\omega_0}^{\frac12}T_\omega(x)
h_{\omega_0}^{\frac12}=(T_\omega)_1(h^{\frac12}_\omega x h^{\frac12}_\omega).
\]
This is precisely the Petz dual of $T$, or the Petz recovery map \cite{petz_dual_1984}. This map satisfies
$\omega_0 \circ T_\omega=\omega$, and $(T_\omega)_{\omega_0}=T$. Moreover, $T_\omega$ is
$n$-positive if and only if $T$ is. 
The above equality shows that $T_*=(T_\infty)_*=(T_\omega)_1$. For $1\le p<\infty$ and $\frac1p+\frac1q=1$,  we have 
\begin{equation}\label{eq:tpq}
T_p^*=(T_\omega)_q.
\end{equation}
In particular, for $p=2$, $T_2$ defines a contraction 
$L^2(\Ne) \to L^2(\Me)$ whose adjoint is $T_2^*=(T_\omega)_2$.

\subsection{JW*-subalgebras and conditional expectations}\label{sec:prelim-CEs}

A JW*-subalgebra of a von Neumann algebra $\Me$ is a unital ultraweakly closed *-invariant subspace $\Je\subset \Me$ that is closed under the Jordan product
\begin{equation}
    a\circ b= \frac12(ab+ba).
\end{equation}
(In fact, it is enough to ask that $\Je$ be closed under taking squares, i.e., $a\in \Je\implies a^2\in \Je$ \cite{hanche-olsen_jordan_1984}.)
A JW*-subalgebra is automatically closed under the Jordan triple product
\begin{equation}\label{eq:triple-prod}
    \{abc\} = a \circ (b\circ c) + (a\circ b)\circ c - b \circ(a\circ c) = \frac12(abc+cba), \qquad a,b,c \in \Me.
\end{equation}
Note that $\{ab1\} = a\circ b$ and $\{aba\} = aba$, $a,b\in \Me$.
If $\Je$ is closed under higher-order symmetrized products, it is called a reversible JW*-subalgebra.
We refer to \cite{hanche-olsen_jordan_1984} for details on JW*-algebras.

Let $\Je\subset \Me$ be a JW*-subalgebra of a von Neumann algebra, and let $\Ae=\Je''$ be the von Neumann algebra it generates.
Then, every central projection in $\Je$ is central in $\Ae$. Thus, every direct sum decomposition of $\Je$ induces a direct sum decomposition of $\Ae$. 
By \cite{haagerup1995positive}, one can decompose $\Je$ as a direct sum of the following types:
\begin{enumerate}[(i)]
    \item\label{it:case-i} $\Je$ is a von Neumann subalgebra. In this case $\Ae =\Je$.
    \item\label{it:case-ii} $\Je$ is reversible. This case decomposes as a direct sum of two subcases:
    \begin{enumerate}[({ii.}a)]
        \item\label{it:case-iia} $Z(\Ae) \ne Z(\Je)$. In this case, there exists a von Neumann algebra $\Be$ and an isomorphism $\Ae \simeq \Be \oplus \Be^{op}$ such that 
        \begin{equation}
            \Je \simeq \{ a\oplus a^{op} \ : a\in \Be\}.
        \end{equation}
        \item\label{it:case-iib} 
        $Z(\Je)=Z(\Ae)$. In this case, $\Je \subset \Ae$ is the set of fixed points of an involutive *-antiautomorphism $\alpha$ on $\Ae$.
    \end{enumerate}
    \item\label{it:case-iii} $\Je$ is not reversible. 
\end{enumerate}
This decomposition will be very useful for us, as it will allow us to reduce statements about JW*-algebras to statements about von Neumann algebras.

We define a (normal) conditional expectation $E$ on a von Neumann algebra $\Me$ as a normal unital positive map $E:\Me \to \Me$, which is idempotent, i.e., $E^2=E$, and satisfies the following Jordan bimodule property
\begin{equation}\label{eq:J-bimodule}
    E(\{abc\}) = \{aE(b)c\}, \qquad a,c \in \Je,\ b \in \Me,
\end{equation}
where $\Je = E(\Me)$ denotes the range of $E$.
It immediately follows that $\Je$ is a JW*-subalgebra: 
if $a,b \in \Je$, then $a\circ b = \{aE(1)b\} = E(\{a1b\}) = E(a\circ b)$.
We will say that $E$ is a conditional expectation onto $\Je$.

It can be shown that a conditional expectation $E$ is completely positive if and only if its range $\Je$ is a von Neumann subalgebra if and only if the full bimodule property $E(abc)=aE(b)c$, $a,c\in \Je$, $b\in \Me$, holds \cite{tomiyama_projection_1957,choi_schwarz_1974}.

For faithful maps, the Jordan bimodule property already follows from idempotence:

\begin{lemma}
    Let $E:\Me\to \Me$ be a faithful normal unital positive map.
    Then, $E$ is a conditional expectation if and only if it is idempotent: $E^2=E$.
\end{lemma}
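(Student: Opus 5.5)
The converse is immediate: if $E$ is a conditional expectation then $E^2=E$ by definition. For the forward direction, assume $E^2=E$ with $E$ faithful, normal, unital and positive, and put $\Je=E(\Me)$. The plan is to use only Kadison's inequality for unital positive maps, $E(x^2)\ge E(x)^2$ for $x=x^*$, together with faithfulness, in a multiplicative-domain argument adapted to the Jordan product.

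\emph{Step 1: basic properties of $\Je$.} $\Je=\ker(1-E)$ is ultraweakly closed because $E$ is normal. It contains $1$ because $E$ is unital. It is $*$-invariant because positive maps are $*$-preserving. Hence $\Je$ is the complex span of its self-adjoint part.

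\emph{Step 2: $E(a^2)=a^2$ for $a=a^*\in\Je$.} Kadison's inequality gives $E(a^2)-a^2=E(a^2)-E(a)^2\ge0$. Applying $E$ and using $E^2=E$ and $E(a)=a$ gives
\[
E\bigl(E(a^2)-a^2\bigr)=E(a^2)-E(a^2)=0 .
\]
Faithfulness then forces $E(a^2)=a^2$. So $a^2\in\Je$, and $a$ lies in the ``Jordan multiplicative domain'' of $E$.

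\emph{Step 3: $E(a\circ b)=a\circ E(b)$ for all $a\in\Je$, $b\in\Me$.} This is the key step. Fix $a=a^*\in\Je$ and $b=b^*\in\Me$, and apply Kadison's inequality to $a+tb$ for real $t$. Using $E(a)=a$ and $E(a^2)=a^2$, we get
\[
0\le E\bigl((a+tb)^2\bigr)-E(a+tb)^2 = 2t\bigl(E(a\circ b)-a\circ E(b)\bigr)+t^2\bigl(E(b^2)-E(b)^2\bigr).
\]
Pairing with an arbitrary positive normal functional yields a real quadratic in $t$ that is nonnegative and has no constant term. Its linear coefficient must therefore vanish, so $E(a\circ b)=a\circ E(b)$. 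Complex linearity extends this to all $a\in\Je$ and $b\in\Me$. Taking $b\in\Je$ gives $E(a\circ b)=a\circ b$, so $\Je$ is closed under the Jordan product and is a JW*-subalgebra.

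\emph{Step 4: the triple-product bimodule property.} By \eqref{eq:triple-prod}, $\{abc\}=a\circ(b\circ c)+(a\circ b)\circ c-b\circ(a\circ c)$. For $a,c\in\Je$ and $b\in\Me$, each term can be handled by Step 3, because $a$, $c$ and $a\circ c$ all lie in $\Je$:
\[
E(a\circ(b\circ c))=a\circ(E(b)\circ c),\qquad E((a\circ b)\circ c)=(a\circ E(b))\circ c,\qquad E(b\circ(a\circ c))=E(b)\circ(a\circ c).
\]
Recombining gives $E(\{abc\})=\{aE(b)c\}$, which is \eqref{eq:J-bimodule}.

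The only substantive point is Step 3. This is a Jordan version of Choi's multiplicative domain theorem, and it needs nothing beyond Kadison's inequality, which holds for positive (not necessarily 2-positive) unital maps. Faithfulness enters only in Step 2, to place $a^2$ inside $\Je$.
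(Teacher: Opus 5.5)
Your proof is correct, and its overall structure matches the paper's. First you show that $\Je=E(\Me)$ is a JW*-subalgebra with the Jordan module property $E(a\circ b)=a\circ E(b)$ for $a\in\Je$, $b\in\Me$. Then you expand $\{abc\}$ via \eqref{eq:triple-prod} and apply the module property term by term; your Step~4 is exactly the paper's closing computation.

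The difference is in the first half, which the paper does not prove but imports from \cite[Prop.~3.21]{arhancet_positive_2023}. You derive it directly. Kadison's inequality together with $E^2=E$ and faithfulness gives $E(a^2)=a^2$ for self-adjoint $a\in\Je$. The quadratic-in-$t$ argument for $a+tb$, a Jordan analogue of Choi's multiplicative-domain theorem, then yields $E(a\circ b)=a\circ E(b)$. Taking $b\in\Je$ gives closure of $\Je$ under $\circ$.

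Your route gives a self-contained, elementary proof. It shows that only Kadison's inequality is needed, which holds for merely positive unital maps. It also shows that faithfulness enters at a single point, to force $a^2\in\Je$, and that the argument works verbatim on any von Neumann algebra. The paper's route is shorter but relies on an external result.
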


\begin{proof}
    \cite[Prop.~3.21]{arhancet_positive_2023} shows that $\Je = E(\Me)$ is a JW*-subalgebra and that $E(a\circ b)=a\circ E(b)$ for all $a\in\Je$, $b\in\Me$.
    This and \eqref{eq:triple-prod} yield the claim: 
    For $a,c\in\Je$ and $b\in\Me$,
    \begin{align*}
        E(\{abc\}) 
        &= E(a\circ(b\circ c))- E(b\circ (a\circ c)) + E(c\circ (b\circ a))\\
        &=a \circ (E(b)\circ c) - E(b)\circ(a\circ c) + c \circ (a\circ E(b)) 
        = \{aE(b)c\}.
    \end{align*}
    
\end{proof}

Let $\omega$ be a normal state on $\Me$ and let $E$ be an $\omega$-preserving  conditional
expectation, that is $\omega\circ E=\omega$. It is easily seen that $E$ is faithful if  $\omega$ is.
In that case, $E$ is the unique $\omega$-preserving conditional expectation onto its
range,  \cite[Lemma 1]{luczak2021some}.

%
%
%

We will need the following lemma about the $L^2$-realization of state-preserving conditional expectations.

\begin{lemma}\label{lemma:idempotent} Let $E:\Me\to \Me$ be a conditional expectation.
Let $\omega$ be a normal faithful state such that $\omega\circ E=\omega$. 
We then have $E_\omega=E$, and $E_2$ is an
 orthogonal projection on $L^2(\Me)$. 

\end{lemma}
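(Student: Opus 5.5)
Here $E$ is viewed as the map $T:\Me\to\Me$, with $\omega_0=\omega\circ E=\omega$ faithful, so $E_p$ and the Petz dual $E_\omega$ are defined as in Section 2.1. The plan is to show first that $E_2$ is an orthogonal projection, and then deduce $E_\omega=E$ from the adjoint relation \eqref{eq:tpq}.

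\textbf{Idempotence and contractivity of $E_2$.} By \eqref{eq:Tp}, $E_2(h_\omega^{1/4}ah_\omega^{1/4}) = h_\omega^{1/4}E(a)h_\omega^{1/4}$ for $a\in\Me$. Since $E^2=E$, the map $E_2^2$ agrees with $E_2$ on the dense subspace $h_\omega^{1/4}\Me h_\omega^{1/4}\subset L^2(\Me)$. Proposition \ref{prop:contraction} says $E_2$ is a contraction, so by continuity $E_2^2=E_2$ on all of $L^2(\Me)$.

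\textbf{Self-adjointness of $E_2$.} This is a standard Hilbert space fact: an idempotent contraction $P$ on a Hilbert space is an orthogonal projection. To see it, take $x\in\ker P$ and $y\in\operatorname{ran}P$. For every scalar $t$ we have $P(y+tx)=y$, so $\|y\|\le\|y+tx\|$. This forces $\langle x,y\rangle=0$, hence $\ker P\perp\operatorname{ran}P$. Applied to $P=E_2$, it shows $E_2$ is an orthogonal projection and in particular $E_2^*=E_2$.

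\textbf{Identifying the Petz dual.} By \eqref{eq:tpq} with $p=2$, $(E_\omega)_2=E_2^*=E_2$. Evaluating on $h_\omega^{1/4}ah_\omega^{1/4}$ gives
\[
h_\omega^{1/4}E_\omega(a)h_\omega^{1/4}=h_\omega^{1/4}E(a)h_\omega^{1/4}, \qquad a\in\Me.
\]
The map $a\mapsto h_\omega^{1/4}ah_\omega^{1/4}$ is injective on $\Me$ because $\omega$ is faithful. Hence $E_\omega(a)=E(a)$ for all $a$, i.e.\ $E_\omega=E$.

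The only step that needs care is justifying the density of $h_\omega^{1/4}\Me h_\omega^{1/4}$ in $L^2(\Me)$ and the injectivity of the symmetric embedding. Both are standard consequences of the faithfulness of $\omega$ in Haagerup $L^p$ theory.

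An alternative route to $E_\omega=E$ goes through $L^1$. One would check $E_*(h_\omega^{1/2}xh_\omega^{1/2})=h_\omega^{1/2}E(x)h_\omega^{1/2}$ directly via the modular theory of $E$ (invariance of $\omega$ under a conditional expectation). This seems harder for Jordan conditional expectations, so the $L^2$ projection argument above is the one to use.
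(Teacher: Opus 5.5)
Your proposal is correct and follows the paper's proof essentially step for step: idempotence of $E_2$ together with contractivity (Proposition~\ref{prop:contraction}) makes $E_2$ an orthogonal projection, so $E_2=E_2^*=(E_\omega)_2$ by \eqref{eq:tpq}, and faithfulness of $\omega$ then gives $E_\omega=E$. You additionally write out the density/continuity argument for $E_2^2=E_2$ and the proof that an idempotent contraction on a Hilbert space is self-adjoint, both of which the paper treats as known.
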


\begin{proof} Since $E$ is idempotent and $\omega\circ E=\omega$, it is clear that $E_2$ is
idempotent as well. Since $E_2$ is also a contraction on the Hilbert space $L^2(\Me)$ by Proposition
\ref{prop:contraction}, it follows that $E_2$ must be an orthogonal projection. Hence
$E_2=E_2^*=(E_\omega)_2$, so that 
\[
h_{\omega}^{\frac14}E(x)h_\omega^{\frac14}=h_{\omega}^{\frac14}E_\omega(x)h_\omega^{\frac14}
\]
by definition of $T_2$. 
Since $\omega$ is faithful, this implies that $E_\omega=E$.

\end{proof}

\section{Statistical experiments and sufficient maps}

A statistical experiment is a pair $\Ee=(\Me,\Se)$, where  $\Me$ is a von Neumann algebra and   $\Se:=\{\rho_\theta\}_{\theta\in
\Theta}$ is a set of normal states on  $\Me$. 
We will say that the experiment is faithful if, for $0\le a \in \Me$, $\varphi(a)=0$ for all $\varphi\in \Se$ implies $a=0$.
The support projection of the experiment $\Ee$ is defined as
\[
\supp(\Se) := \vee_{\omega\in\Se}\,\supp(\omega).
\] 
Faithfulness is equivalent to having $\supp(\Se)=1$. Following the terminology of
\cite{jencova2006sufficiencyfirst}, we say that a normal state $\omega$ dominates $\Se$ if, for $0\le a\in \Me$, $\omega(a)=0$ implies $\varphi(a)=0$ for all $\varphi\in \Se$, or, equivalently, if $\supp(\Se)\le \supp(\omega)$.
In this case, the dominating state may be chosen from the closed convex hull of $\Se$.
For our purposes, we may always make this choice. Moreover,
we will always assume in
the sequel that $\Me$ is $\sigma$-finite, in which case a dominating state always exists.

We will say that an NUP map $T:\Ne\to \Me$ is sufficient with respect to the experiment
$(\Me,\Se)$ if there exists an NUP map $S:\Me\to\Ne$ such that 
\[
\rho\circ T\circ S=\rho,\qquad \forall \rho\in \Se.
\]
The NUP map $S$ is called a recovery map for $T$. Moreover, if the recovery map can be chosen to be completely positive, i.e., an NUCP map, we will say that $T$ is cp-sufficient with respect to $(\Me,\Se)$.

If $T:\Ne\to \Me$ is an NUP map and $\omega$ is a state dominating $\Se$, then it is easy
to see that $\omega\circ T$ dominates
$\{\rho\circ T\colon \rho\in \Se\}$. By standard arguments (see e.g.
\cite{luijk2026sufficiency}), when considering (cp-)sufficiency of $T$ with respect to
$\Ee$, we may assume without loss of generality that $\Se$ is dominated by a faithful
state  $\omega$ such that  $\omega\circ T$ is also faithful. 
Further, since $\omega$
can be chosen from the closed convex hull of $\Se$, we may assume that $\omega$ is itself
contained in $\Se$.

If $\Me_0\subseteq \Me$ is a JW*-subalgebra, we will say that $\Me_0$ is (cp-)sufficient with
respect to $(\Me,\Se)$ if there is an NU(C)P map $S:\Me\to \Me$ with range in $\Me_0$
such that 
\[
\rho\circ S=\rho,\qquad \rho\in \Se.
\]
Note that if $\Me_0$ is in fact a von Neumann subalgebra, then it is (cp-)sufficient if and only if 
the inclusion map $\Me_0\hookrightarrow \Me$ is (cp-)sufficient with respect to
$(\Me,\Se)$. 
Also note that the inclusion map $\Me_0\hookrightarrow \Me$ is completely positive if and only if $\Me_0$ is a von Neumann subalgebra.

\subsection{Minimal sufficient subalgebras and universal recovery}

Let $\Ee=(\Me,\Se)$ be a statistical experiment, with a faithful normal state $\omega\in
\Se$. 

We say that a JW*-subalgebra  $\Me_0\subseteq \Me$ is minimal sufficient
with respect to $\Ee$ 
if it is sufficient and contained in any sufficient JW*-subalgebra. Note that a
minimal sufficient JW*-subalgebra is necessarily unique, if it exists.
In this section, we will prove existence of minimal sufficient JW*-subalgebras. We will also prove that the Petz recovery map is a universal recovery map. 
Both these statements follow from an argument using ergodic theory, which we adapt from
the works by \L uczak \cite{luczak2014quantum,luczak2021some}.

Let $\Le$ denote the set of NUP maps $R:\Me\to \Me$, such that $\rho\circ R=\rho$ for all
$\rho\in \Se$. It is easy to see that $\Le$ is a semigroup, moreover, it is convex and
closed in the point ultraweak topology. By the mean ergodic theorem \cite{thomsen1985invariant}, it follows that
there is an idempotent map $E\in \Le$ such that
\[
R\circ E=E\circ R=E,\qquad \forall R\in \Le.
\]
Moreover, the range $\Je_\Ee$  of $E$ is the set of common fixed points of all the maps in $\Le$. 

\begin{prop} \label{prop:min_suf} 
    $\Je_\Ee$ is the  minimal sufficient JW*-subalgebra with respect to
$\Ee$.
\end{prop}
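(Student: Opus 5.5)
The plan is to verify three things in turn: that $E$ is a conditional expectation in the sense of Section~\ref{sec:prelim-CEs}, so that $\Je_\Ee$ is a JW*-subalgebra; that $\Je_\Ee$ is sufficient; and that it is contained in every sufficient JW*-subalgebra.

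\textbf{Step 1: $E$ is a conditional expectation.} The map $E$ is an NUP idempotent with $\omega\circ E=\omega$. Since $\omega\in\Se$ is faithful, $E$ is faithful: if $a\ge 0$ and $E(a)=0$, then $\omega(a)=\omega(E(a))=0$, hence $a=0$. The lemma characterizing faithful idempotent NUP maps then shows that $E$ is a conditional expectation. In particular, its range $\Je_\Ee=E(\Me)$ is a JW*-subalgebra. (Ultraweak closedness comes from normality and idempotence: the range equals $\ker(\mathrm{id}-E)$.)

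\textbf{Step 2: sufficiency.} Viewed as a map into $\Je_\Ee$, $E$ is an NUP map with range in $\Je_\Ee$. Since $E\in\Le$, it satisfies $\rho\circ E=\rho$ for all $\rho\in\Se$. Hence $\Je_\Ee$ is sufficient.

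\textbf{Step 3: minimality.} Let $\Me_0$ be any sufficient JW*-subalgebra, with an NUP map $S:\Me\to\Me$ satisfying $S(\Me)\subseteq\Me_0$ and $\rho\circ S=\rho$ for all $\rho\in\Se$. Then $S\in\Le$, so $S\circ E=E$, which gives $\Je_\Ee=E(\Me)=S(E(\Me))\subseteq S(\Me)\subseteq\Me_0$. This step uses only $S\circ E=E$ from the ergodic theorem. Alternatively, it follows from the fact that $\Je_\Ee$ is the common fixed point set of $\Le$: any $x\in\Je_\Ee$ satisfies $x=S(x)\in\Me_0$.

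\textbf{Main obstacle.} The substantive point is Step 1: showing that the range of the ergodic projection is closed under the Jordan product. Here faithfulness of $\omega\in\Se$ is essential, via the Arhancet-type lemma. Should one prefer a direct argument, I would use Kadison's inequality for positive maps, $E(x^2)\ge E(x)^2$ for $x=x^*$. For $x=x^*\in\Je_\Ee$, this gives $E(x^2)-x^2\ge 0$, and $\omega(E(x^2)-x^2)=0$. Faithfulness then yields $E(x^2)=x^2$. Thus $\Je_\Ee$ is closed under squares of self-adjoint elements, and by polarization and *-invariance it is a JW*-subalgebra.
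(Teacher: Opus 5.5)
Your proof is correct and follows the paper's argument: faithfulness of $\omega\in\Se$ makes $E$ a faithful idempotent NUP map, hence a conditional expectation with JW*-range; $E\in\Le$ gives sufficiency; and $S\circ E=E$ for every $S\in\Le$ gives $\Je_\Ee\subseteq S(\Me)\subseteq\Me_0$. Your Kadison-inequality alternative also correctly establishes Jordan closure of the range, which is all this proposition needs, but the paper relies on the full Jordan bimodule property of $E$ later (e.g.\ in Lemma \ref{lemma:lpJ}), so the route through the faithful-idempotent lemma is the one worth keeping.
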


\begin{proof} Note that $E\in \Le$, so that it is an idempotent positive unital map on
$\Me$ 
preserving the faithful state $\omega$. It follows that $E$ is a faithful normal conditional expectation
and its range $\Je_\Ee$ is a JW*-subalgebra in $\Me$. Since also $\rho\circ
E=\rho$ for all $\rho\in \Se$, $\Je_\Ee$ is sufficient. If $\Je$ is any  sufficient
JW*-subalgebra with respect to $\Ee$, then there is some NUP map $R:\Me\to\Me$ with range in
$\Je$ such that $\rho\circ R=\rho$ for all $\rho\in \Se$, in other words, $R\in \Le$. It
follows that $E\circ R=R\circ E=E$, so that we have 
\[
a=E(a)=R\circ E(a),\qquad \forall a\in \Je_\Ee.
\]
This shows that $\Je_\Ee\subseteq \mathrm{Range}(R)\subseteq \Je$, so that $\Je_\Ee$ is minimal sufficient.

\end{proof}

We now show that the Petz recovery map is universal.

\begin{prop}\label{prop:universal} 
An NUP map  $T:\Ne\to \Me$ is  sufficient with respect
to $\Ee$ if and only if
\[
\rho\circ T\circ T_\omega=\rho,\qquad \forall \rho\in \Se.
\]

\end{prop}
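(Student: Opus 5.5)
The converse direction is immediate: if $\rho\circ T\circ T_\omega=\rho$ for all $\rho\in\Se$, then $T_\omega$ is itself a normal unital positive recovery map, so $T$ is sufficient. For the forward direction, let $S:\Me\to\Ne$ be an NUP recovery map. We work under the standing reductions: $\omega\in\Se$ is faithful and $\omega_0=\omega\circ T$ is faithful. The plan is to show that $T\circ T_\omega$ acts as the identity on the minimal sufficient JW*-subalgebra $\Je_\Ee$ of Proposition \ref{prop:min_suf}. Once that is done, we get $\rho\circ T\circ T_\omega=\rho\circ T\circ T_\omega\circ E=\rho\circ E=\rho$ for every $\rho\in\Se$, because $\rho\circ E=\rho$.

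First, $R:=T\circ S$ is an NUP map on $\Me$ with $\rho\circ R=\rho$ for all $\rho\in\Se$, so $R\in\Le$. By the mean ergodic property, $R\circ E=E$, i.e.\ $T(S(a))=a$ for all $a\in\Je_\Ee$. Passing to the $L^2$-level, consider $T_2:L^2(\Ne)\to L^2(\Me)$, defined with respect to $\omega_0$ and $\omega$. Next consider $S_2:L^2(\Me)\to L^2(\Ne)$, defined with respect to $\omega$ and $\omega\circ T\circ S=\omega$ on $\Me$ and $\omega\circ T=\omega_0$ on $\Ne$. Note that $\omega_0\circ S=\omega$, so $S_2$ is well defined and is a contraction by Proposition \ref{prop:contraction}. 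Both maps are contractions, and for $\xi=h_\omega^{1/4}ah_\omega^{1/4}$ with $a\in\Je_\Ee$ we have $T_2S_2\xi=\xi$.

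The key Hilbert space observation is the following. If $\|T_2\eta\|=\|\eta\|$ with $\eta=S_2\xi$, then since $\|\xi\|=\|T_2S_2\xi\|\le\|S_2\xi\|\le\|\xi\|$, equality holds throughout. For a contraction $T_2$, the equality $\|T_2\eta\|=\|\eta\|$ implies $T_2^*T_2\eta=\eta$, because $\|(1-T_2^*T_2)^{1/2}\eta\|^2=\|\eta\|^2-\|T_2\eta\|^2=0$. Hence
\[
(T_\omega)_2\xi=T_2^*\xi=T_2^*T_2\eta=\eta=S_2\xi.
\]
Applying $T_2$ gives $T_2(T_\omega)_2\xi=\xi$. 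Since $(T\circ T_\omega)_2=T_2(T_\omega)_2$ (the intermediate state is $\omega_0$, and $\omega_0\circ T_\omega=\omega$), we obtain
\[
h_\omega^{1/4}\,T(T_\omega(a))\,h_\omega^{1/4}=h_\omega^{1/4}ah_\omega^{1/4}.
\]
Faithfulness of $\omega$ then gives $T\circ T_\omega(a)=a$ for $a\in\Je_\Ee$, and the conclusion follows as explained in the first paragraph.

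The main point needing care is the bookkeeping of the $L^2$-implementations. We must check that $(T\circ T_\omega)_2=T_2\circ(T_\omega)_2$ and $(T\circ S)_2=T_2\circ S_2$, which is immediate from \eqref{eq:Tp} because the intermediate state is the same ($\omega_0\circ S=\omega_0\circ T_\omega=\omega$). We must also check the identity $T_2^*=(T_\omega)_2$ from \eqref{eq:tpq}. No positivity beyond what Proposition \ref{prop:contraction} requires is used, so the argument works for merely positive maps. An alternative route would be to show directly that $T_\omega\circ T$ fixes $\Je_\Ee$ elementwise, but the contraction-equality argument above avoids that.
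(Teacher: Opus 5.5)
You correctly show that $\Phi:=T\circ T_\omega$ fixes $\Je_\Ee$ pointwise, i.e.\ $\Phi\circ E=E$. Your route is a valid variant of the paper's: you use $T\circ S\circ E=E$ and the norm equality for $\eta=S_2\xi$, while the paper uses $E\circ T\circ S=E$ and the norm equality for $T_2^*\xi$. However, your final step does not follow. The equality $\rho\circ T\circ T_\omega=\rho\circ T\circ T_\omega\circ E$ would require the functional $\rho\circ\Phi$ to be $E$-invariant, and nothing gives you that. The fact $\rho\circ E=\rho$ only justifies the last equality in your chain. From $\Phi\circ E=E$ you get $\rho\circ\Phi\circ E=\rho$, which says only that $\rho\circ\Phi$ and $\rho$ agree on $\Je_\Ee$. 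What you need is the composition in the other order, $E\circ\Phi=E$, because then $\rho\circ\Phi=\rho\circ E\circ\Phi=\rho\circ E=\rho$.

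The missing step is short, but it uses an ingredient you never invoke: by Lemma \ref{lemma:idempotent}, $E_2$ is an orthogonal projection, and $\Phi_2=T_2(T_\omega)_2=T_2T_2^*$ is self-adjoint. Your identity $\Phi_2\xi=\xi$ for $\xi\in h_\omega^{1/4}\Je_\Ee h_\omega^{1/4}$ gives $\Phi_2E_2=E_2$ on the dense set $h_\omega^{1/4}\Me h_\omega^{1/4}$, since $E_2(h_\omega^{1/4}xh_\omega^{1/4})=h_\omega^{1/4}E(x)h_\omega^{1/4}$. By continuity it holds everywhere. Taking adjoints gives $E_2\Phi_2=E_2$, and faithfulness of $\omega$ then yields $E\circ T\circ T_\omega=E$. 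This is exactly how the paper concludes. Self-adjointness of $\Phi_2$ is not strictly needed, since a contraction and its adjoint have the same fixed vectors, but $E_2=E_2^*$ is. With this line added, your proof is complete.
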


\begin{proof} Assume that $T$ is sufficient and let $S:\Me\to \Ne$ be a recovery map, then
clearly $T\circ S\in \Le$, and therefore
\[
T\circ S\circ E=E\circ T\circ S=E.
\]
Using the representation as contractions on $L^2(\Me)$ with the faithful state $\omega$, we obtain
\[
S_2^*T_2^*E_2=E_2,
\]
since $E_2$ is an orthogonal projection on $L^2(\Me)$, by Lemma \ref{lemma:idempotent}. 
Since both $T_2$ and $S_2$ are contractions, we have 
\[
\|\xi\|_2=\|S_2^*T_2^*\xi\|_2\le \|T_2^*\xi\|_2\le \|\xi\|_2,
\]
and hence  $\|T_2^*\xi\|_2=\|\xi\|_2$, for any $\xi$ in the range of $E_2$. This is
equivalent to $T_2T_2^*E_2=E_2$. 
Since $E=E_2^*$ and $T_2T_2^*$ are self-adjoint, we also have $E_2T_2T_2^*=E_2$, which shows $E\circ T\circ T_\omega=E$. 
But then for
any $\rho\in \Se$, we get
\[
\rho\circ T\circ T_\omega=\rho\circ E\circ T\circ T_\omega=\rho\circ E=\rho.
\]
The converse statement is clear.

\end{proof}

\begin{coro}\label{coro:cpsuf} Let $T$ be an NUCP map. Then $T$ is sufficient with
respect to $\Ee$ if and only if it is cp-sufficient. In particular, a sufficient von Neumann
subalgebra in $\Me$ is cp-sufficient. 

\end{coro}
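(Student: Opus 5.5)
The plan is to use Proposition \ref{prop:universal}, which singles out the Petz dual $T_\omega$ as a universal recovery map, together with the fact that $T_\omega$ inherits complete positivity from $T$.

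First, reduce to the standard setting described before the subsection on minimal sufficient subalgebras. We pass to the support of a dominating state and arrange that there is a faithful normal state $\omega\in\Se$ dominating $\Se$ with $\omega\circ T$ faithful. One must check that this reduction respects complete positivity. Compressing to a corner $p\Me p$, or to the corner $q\Ne q$ of the support $q$ of $\omega\circ T$, is a completely positive operation. Extending a recovery map from the corner back to the whole algebra can also be done completely positively, for instance by $x\mapsto S(pxp)+\varphi(x)\,(1-q)$ with a normal state $\varphi$, and this preserves the relevant states. So the reduction is harmless.

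Next, assume $T$ is NUCP and sufficient. By Proposition \ref{prop:universal}, $\rho\circ T\circ T_\omega=\rho$ for all $\rho\in\Se$, so $T_\omega$ is a recovery map. As recalled in the paper, $T_\omega$ is $n$-positive exactly when $T$ is, so $T_\omega$ is completely positive. Hence $T$ is cp-sufficient. The converse is trivial, since a cp-recovery map is in particular an NUP recovery map.

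For the second statement, let $\Me_0\subseteq\Me$ be a sufficient von Neumann subalgebra. As noted in the paper, the inclusion $\iota:\Me_0\hookrightarrow\Me$ is then a sufficient map. It is also NUCP because $\Me_0$ is a von Neumann subalgebra. By the first part there is an NUCP map $S:\Me\to\Me_0$ with $\rho\circ\iota\circ S=\rho$. Then $\iota\circ S$ is an NUCP map on $\Me$ with range in $\Me_0$ preserving $\Se$, so $\Me_0$ is cp-sufficient.

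The only genuine obstacle is the faithfulness reduction: we need the recovery map produced in the reduced setting to extend to a completely positive map on the original algebras. I expect the corner-compression-plus-state construction above to handle this routinely.
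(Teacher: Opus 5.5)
Your proposal is correct and matches the paper's proof: Proposition~\ref{prop:universal} makes the Petz dual $T_\omega$ a recovery map, and $T_\omega$ is completely positive because $T$ is. Two details the paper leaves implicit are also sound: your check that the faithfulness reduction (compressing to supports, then extending back via $x\mapsto S(pxp)+\varphi(x)(1-q)$) preserves complete positivity, and your derivation of the von Neumann subalgebra clause through the NUCP inclusion map.
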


\begin{proof} 
Assume that $T$ is sufficient with respect to $\Ee$. 
By Proposition  \ref{prop:universal}, $T_\omega$ is a recovery map for $T$. 
Since $T_\omega$ is completely positive if $T$ is, the claim follows. 

\end{proof}

\begin{coro}\label{coro:suff_ext} Let $T:\Ne\to \Me$  be  sufficient with respect to
$\Ee$, and let $E$ be the conditional expectation onto $\Je_\Ee$. Then $T$  is sufficient for $\Ee'=(\Me,\Se')$, where $\Se'$ is any set of
normal states that are invariant with respect to $E$.
\end{coro}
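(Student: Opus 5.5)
The plan is to reuse the proof of Proposition \ref{prop:universal}, which in fact establishes more than it states. Assuming $T$ is sufficient for $\Ee$, that proof shows
\[
E\circ T\circ T_\omega = E,
\]
where $E$ is the idempotent in $\Le$ with range $\Je_\Ee$ and $\omega\in\Se$ is the faithful dominating state used to define $T_\omega$. This identity involves only $E$, $T$ and the Petz dual $T_\omega$; no particular state from $\Se$ appears in it.

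Now take any normal state $\rho$ on $\Me$ with $\rho\circ E=\rho$. Then
\[
\rho\circ T\circ T_\omega=\rho\circ E\circ T\circ T_\omega=\rho\circ E=\rho .
\]
Hence $T_\omega$, which is an NUP map $\Me\to\Ne$, recovers every state in $\Se'$. This is exactly sufficiency of $T$ for $\Ee'=(\Me,\Se')$, with recovery map $S=T_\omega$.

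The one point to check is that $T_\omega$ is defined independently of $\Se'$. It depends only on $T$ and on the fixed faithful state $\omega$, with $\omega\circ T$ faithful. This is the standing reduction made before Proposition \ref{prop:universal}, and it is set up for the original experiment $\Ee$, not for $\Ee'$. So nothing about $\Se'$ needs to be faithful or dominated by anything, and no new reduction is required.

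I see no real obstacle here. The only care needed is to note that the identity $E\circ T\circ T_\omega=E$ was derived in the proof of Proposition \ref{prop:universal} as an operator identity, using the $L^2$-argument with respect to $\omega$ and Lemma \ref{lemma:idempotent}. It was not derived merely as an identity after composing with states of $\Se$. If one prefers not to rely on the interior of that proof, it can be rederived in the same way. Start from a recovery map $S$, which gives $T\circ S\in\Le$ and so $E\,T_2S_2=E_2$ in $L^2$ form. The norm argument then yields $T_2T_2^*E_2=E_2$, and taking adjoints gives $E_2T_2T_2^*=E_2$. Since $\omega$ is faithful, this is equivalent to $E\circ T\circ T_\omega=E$ on $\Me$.
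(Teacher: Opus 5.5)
Your proof is correct and is essentially the paper's argument: both rest on the chain $\rho\circ T\circ S=\rho\circ E\circ T\circ S=\rho\circ E=\rho$ for $\rho\in\Se'$. The only difference is that the paper takes an arbitrary recovery map $S$ and gets $E\circ T\circ S=E$ directly from $T\circ S\in\Le$ and the ergodic property $E\circ R=E$ for all $R\in\Le$. Your detour through the $L^2$ identity $E\circ T\circ T_\omega=E$ is therefore valid but unnecessary, and the paper's version shows more: every recovery map for $\Ee$ also recovers $\Se'$.
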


\begin{proof} Assume that $T$ is sufficient with respect to $\Ee$ and let $S$ be any NUP recovery map. Then 
\[
\rho\circ T\circ S=\rho\circ E\circ T\circ S=\rho\circ E=\rho,\qquad \forall \rho\in \Se',
\]
hence $T$ is sufficient with respect to $\Ee'$.

\end{proof}

\subsection{Minimal sufficient von Neumann subalgebras}

Let $\Ae\subseteq \Me$ be a von Neumann subalgebra. Similarly as before, we say that
$\Ae$ is a minimal sufficient von Neumann subalgebra if it is sufficient and is
contained in any other sufficient von Neumann subalgebra. By Corollary
\ref{coro:cpsuf}, a minimal sufficient von Neumann subalgebra is in fact minimal cp-sufficient and
vice versa, so that these two possible notions coincide.
Note that a minimal sufficient von Neumann subalgebra is necessarily unique, if it exists.
The following statement is an immediate consequence of \cite[Thm.~1]{luczak2014quantum}.

\begin{theorem}\label{thm:minsuf} 
Let $\Ae_\Ee$ denote  the von Neumann subalgebra generated by
$\Je_\Ee$. Then $\Ae_\Ee$ is the minimal sufficient von Neumann subalgebra with respect to $\Ee$.
The subalgebra $\Ae_\Ee$ is the range of a faithful normal completely positive conditional expectation
$F:\Me\to \Me$ and we have the unique factorization 
\[
E=\tilde E\circ F,
\]
where $\tilde E$ is a conditional expectation of $\Ae_\Ee$ onto $\Je_\Ee$. 
\end{theorem}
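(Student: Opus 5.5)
The plan is to lean on Łuczak's result \cite[Thm.~1]{luczak2014quantum}, which gives, for a faithful family of normal states on a $\sigma$-finite von Neumann algebra, a minimal sufficient von Neumann subalgebra that is the range of a faithful normal conditional expectation (in the 2-positive/completely positive sense). The argument then has two parts. First, identify Łuczak's minimal sufficient von Neumann subalgebra with $\Ae_\Ee=\Je_\Ee''$. Second, deduce the factorization $E=\tilde E\circ F$ from uniqueness of state-preserving conditional expectations.

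For the identification, let $\Ae$ be any sufficient von Neumann subalgebra. By Corollary \ref{coro:cpsuf} it is cp-sufficient, and in particular sufficient as a JW*-subalgebra. By Proposition \ref{prop:min_suf} we get $\Je_\Ee\subseteq\Ae$, hence $\Je_\Ee''\subseteq\Ae$. It remains to show that $\Ae_\Ee=\Je_\Ee''$ is itself sufficient. This is the main obstacle, since a priori only $\Je_\Ee$ is known to admit a state-preserving positive projection. I would take Łuczak's minimal sufficient von Neumann subalgebra $\Ae_0$ together with its faithful $\Se$-preserving normal conditional expectation $F$. By the previous sentence, $\Je_\Ee''\subseteq\Ae_0$. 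Conversely, I would show that $\Ae_0$ is generated by $\Je_\Ee$. In Łuczak's construction, $\Ae_0$ is the fixed-point algebra of the semigroup of NUCP $\Se$-preserving maps. The point is that $F\in\Le$, so $E\circ F=F\circ E=E$. Using the $L^2(\Me,\omega)$-picture of Lemma \ref{lemma:idempotent}, $E_2\le F_2$ are commuting orthogonal projections.

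If that identification via fixed points stalls, the fallback is to argue directly from the structure theory of Section \ref{sec:prelim-CEs}. In cases \ref{it:case-i} and \ref{it:case-iia} one writes down a conditional expectation onto $\Je''$ explicitly, in the latter via $a\mapsto a\oplus a^{op}$ acting on the $\Be\oplus\Be^{op}$ decomposition. In case \ref{it:case-iib} one uses the antiautomorphism $\alpha$ and the observation that states invariant under $E$ are invariant under an averaging with $\alpha$. The non-reversible case \ref{it:case-iii} is the hardest in this route. I would therefore prefer to rely on Łuczak, whose theorem gives minimality of $\Ae_0$ among all sufficient von Neumann subalgebras, combined with the inclusion $\Je_\Ee\subseteq\Ae_0$.

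Once $\Ae_\Ee$ is the range of a faithful $\Se$-preserving conditional expectation $F$, the factorization is routine. Since $\Je_\Ee\subseteq\Ae_\Ee$, we have $F\circ E=E$. The map $\tilde E:=E|_{\Ae_\Ee}$ is a faithful idempotent NUP map on $\Ae_\Ee$ with range $\Je_\Ee$, hence a conditional expectation by the lemma on faithful idempotent maps. Then $\tilde E\circ F$ is an $\omega$-preserving conditional expectation of $\Me$ onto $\Je_\Ee$: it is idempotent because $F\circ\tilde E\circ F=\tilde E\circ F$, it is faithful, and its range is $\Je_\Ee$. By uniqueness of the $\omega$-preserving conditional expectation onto a given range \cite[Lemma 1]{luczak2021some}, $\tilde E\circ F=E$.

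Uniqueness of the factorization follows the same way. Any factorization $E=\tilde E'\circ F'$ forces $F'$ to be an $\omega$-preserving conditional expectation onto $\Ae_\Ee$, so $F'=F$ by the same uniqueness lemma. Then $\tilde E'=E|_{\Ae_\Ee}=\tilde E$.
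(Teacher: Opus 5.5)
\textbf{There is a genuine gap in the identification step.} You never prove the inclusion $\Ae_0\subseteq\Je_\Ee''$.

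\begin{itemize}
\item The observation that $F\in\Le$, hence $E\circ F=F\circ E=E$ and $E_2\le F_2$, only re-expresses $\Je_\Ee\subseteq F(\Me)=\Ae_0$, which you already had.
\item ``Minimality of $\Ae_0$ combined with $\Je_\Ee\subseteq\Ae_0$'' again yields only $\Je_\Ee''\subseteq\Ae_0$.
\item The structure-theory fallback cannot close the gap either. The case decomposition of Section~\ref{sec:prelim-CEs} describes the inclusion $\Je\subset\Je''$ and the expectations $\Je''\to\Je$. It says nothing about how $\Je''$ sits inside $\Me$. Producing a conditional expectation $\Me\to\Je''$ is exactly the Haagerup--St{\o}rmer problem. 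The paper derives that problem's solution from this theorem (Corollary~\ref{coro:positive_projections}), and Haagerup and St{\o}rmer settled it only in special cases by that route.
\end{itemize}

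\textbf{The obstacle is misdiagnosed.} In the paper's sense, sufficiency of a subalgebra only requires an $\Se$-preserving NUP map with range \emph{contained in} it, not a projection onto it. So $\Je_\Ee''$ is sufficient for free: $E$ itself is such a map, since $E(\Me)=\Je_\Ee\subseteq\Je_\Ee''$. Together with your minimality argument via Proposition~\ref{prop:min_suf}, this already shows that $\Ae_\Ee$ is the minimal sufficient von Neumann subalgebra. This is the part the paper attributes to \L{}uczak.

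\textbf{The real content is the upgrade to complete positivity, and that is Corollary~\ref{coro:cpsuf}.}
\begin{itemize}
\item The inclusion $\iota:\Je_\Ee''\hookrightarrow\Me$ is NUCP and is sufficient with recovery map $E$.
\item By Proposition~\ref{prop:universal}, its Petz dual $\iota_\omega$ is also a recovery map, and $\iota_\omega$ is completely positive. Hence $\Je_\Ee''$ is cp-sufficient.
\item Minimality of $\Ae_0$ among cp-sufficient subalgebras then gives $\Ae_0\subseteq\Je_\Ee''$. Equivalently, use the minimality of the range of the minimal idempotent $F$ of $\Le_{cp}$, as the paper does. This completes the identification.
\end{itemize}

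Your factorization and uniqueness arguments are fine after that. You can shorten the factorization: $E=E\circ F=\tilde E\circ F$ follows directly from $F\in\Le$.
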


\begin{proof} It follows from the proof of \cite[Thm.~1]{luczak2014quantum} that $\Ae_\Ee$
is the minimal sufficient von Neumann subalgebra with respect to $\Ee$. By Corollary
\ref{coro:cpsuf}, it is in fact minimal cp-sufficient.
There is  an alternative construction of the minimal cp-sufficient subalgebra,
analogous to the proof of Proposition \ref{prop:min_suf}.  
We again use the mean ergodic theorem, this time with the subsemigroup  $\Le_{cp}$ consisting of all NUCP maps in $\Le$.  
The minimal idempotent in this case will be
denoted by $F$. Note that $F$ is a completely positive conditional expectation, and therefore its range is
a von Neumann subalgebra.  It can be  proved exactly the same  way as Proposition \ref{prop:min_suf} that
the range of $F$ is minimal cp-sufficient, so that we must have $F(\Me)=\Ae_\Ee$ by
uniqueness.

Since  $F\in \Le_{cp}\subseteq \Le$, we have $E\circ F=F\circ E=E$.  Put $\tilde
E=E|_{\Ae_{\Ee}}$, then $\tilde E$ is a faithful normal conditional expectation on $\Ae_\Ee$ with
range $\Je_\Ee$ and we get
\[
E=E\circ F=\tilde E\circ F.
\]
Uniqueness is clear, since $F$ must preserve the faithful state $\omega$ and similarly $\tilde E$ preserves $\omega|_{\Ae_\Ee}$.

\end{proof}

\section{Conditional expectations onto JW*-algebras}

Let  $E:\Me\to \Me$ be a faithful conditional expectation with range $\Je$. 
Haagerup and  Størmer \cite{haagerup1995positive} asked the question whether $E$
factorizes through the generated von Neumann subalgebra $\Ae=\Je''$ as in the following diagram:
\begin{equation}
    \begin{tikzcd}
    \Me \arrow{rr}{E} \arrow{dr}{F} & & \Je \\
    &\Ae \arrow{ur}{\tilde E}
\end{tikzcd}
\end{equation}
Here $F$ and $\tilde E =E|_\Ae$ are conditional expectations. This question was solved in
\cite{haagerup1995positive} in some special cases. We next observe that if $\Me$ is
$\sigma$-finite,  existence of the factorization is a rather straightforward consequence
of Theorem \ref{thm:minsuf}.

\begin{coro}\label{coro:positive_projections} Let $\Me$ be a $\sigma$-finite von Neumann algebra and let
$E:\Me\to \Me$ be a faithful normal conditional expectation onto a JW*-subalgebra $\Je\subset \Me$.
Let $\Ae$ be the von Neumann subalgebra generated by $\Je$. Then there
exists a unique faithful normal conditional expectation $F$ onto  $\Ae$ 
and a unique faithful normal conditional expectation $\tilde E =E |_{\Ae}$ on $\Ae$ with range
$\Je$, such that $E=\tilde E\circ F$.
\end{coro}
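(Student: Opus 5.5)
The plan is to realize $E$ as the minimal idempotent of a suitable statistical experiment and then apply Theorem \ref{thm:minsuf}, exactly as sketched in the introduction. Since $\Me$ is $\sigma$-finite, fix a faithful normal state $\varphi$ on $\Me$ and put $\omega:=\varphi\circ E$. Because $E$ is faithful and $\varphi$ is faithful, $\omega$ is a faithful normal state with $\omega\circ E=\omega$. Take $\Se$ to be the set of all $E$-invariant normal states on $\Me$; this set contains $\omega$, so $\Ee=(\Me,\Se)$ is faithful and the results of Section 3 apply with the dominating faithful state $\omega\in\Se$.

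The first step is to identify the minimal sufficient JW*-subalgebra: I claim $\Je_\Ee=\Je$. Clearly $E\in\Le$, so $\Je=E(\Me)$ is sufficient, and Proposition \ref{prop:min_suf} gives $\Je_\Ee\subseteq\Je$. For the reverse inclusion, I would show that every $a\in\Je$ is fixed by every $R\in\Le$. Take $a=a^*\in\Je$ (it suffices to treat self-adjoint elements). For any normal state $\psi$, the state $\psi\circ E$ lies in $\Se$, hence $\psi\circ E\circ R=\psi\circ E$, i.e.\ $E\circ R=E$ on $\Me$. This does not directly give $R(a)=a$, so instead I use the $L^2$ picture. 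By Lemma \ref{lemma:idempotent}, $E_2$ is an orthogonal projection; $R_2$ is a contraction and $R_2^*=(R_\omega)_2$. From $E\circ R=E$ and $\omega\circ R=\omega$ we get $E_2R_2=E_2$, so $R_2^*E_2=E_2$. Then, as in the proof of Proposition \ref{prop:universal}, $\|R_2^*\xi\|=\|\xi\|$ for $\xi\in\mathrm{Ran}\,E_2$, and contractivity forces $R_2\xi=\xi$ there (since $\|R_2\xi-\xi\|^2=\|R_2\xi\|^2-2\mathrm{Re}\<R_2\xi,\xi\>+\|\xi\|^2=\|R_2\xi\|^2-\|\xi\|^2\le0$). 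Applying this to $\xi=h_\omega^{1/4}ah_\omega^{1/4}$ gives $h_\omega^{1/4}R(a)h_\omega^{1/4}=h_\omega^{1/4}ah_\omega^{1/4}$, and faithfulness of $\omega$ yields $R(a)=a$. Hence $\Je\subseteq\Je_\Ee$, and the minimal idempotent in $\Le$ equals $E$ by uniqueness of the $\omega$-preserving conditional expectation onto $\Je$ (\cite[Lemma 1]{luczak2021some}).

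With $\Je_\Ee=\Je$, Theorem \ref{thm:minsuf} gives that $\Ae=\Je''$ is the minimal sufficient von Neumann subalgebra. It also provides a faithful normal completely positive conditional expectation $F$ onto $\Ae$ with $E=\tilde E\circ F$, where $\tilde E=E|_{\Ae}$ is a conditional expectation of $\Ae$ onto $\Je$. Faithfulness of $\tilde E$ follows from that of $E$, since it is a restriction.

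The remaining concern is uniqueness, which is asserted outright rather than relative to a state. Any $F'$ onto $\Ae$ and $\tilde E'$ with $E=\tilde E'\circ F'$ satisfy $\omega\circ F'=\omega\circ E\circ F'$. Here I need $E\circ F'=E$: since $F'$ is idempotent with range $\Ae\supseteq\Je$, we have $E\circ F'=\tilde E'\circ F'\circ F'=\tilde E'\circ F'=E$. Thus $\omega\circ F'=\omega$, and uniqueness of the $\omega$-preserving conditional expectation onto $\Ae$ forces $F'=F$. Then $\tilde E'=\tilde E'\circ F|_{\Ae}=E|_{\Ae}$. The main obstacle I expect is the inclusion $\Je\subseteq\Je_\Ee$, that is, verifying that the minimal idempotent really is $E$. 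The $L^2$ argument above is how I would handle it.
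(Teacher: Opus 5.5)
Your proof is correct and follows the paper's overall strategy. You take the experiment of all $E$-invariant normal states, which is exactly the paper's $\Se_E=\{\rho\circ E\}$. You show that its minimal idempotent $E'$ equals $E$, so that $\Je_\Ee=\Je$. You then read off existence and uniqueness from Theorem~\ref{thm:minsuf}.

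The difference lies in the step you flagged as the main obstacle. The paper settles it in one line, purely algebraically:
\begin{itemize}
\item Since $E\in\Le$, the absorbing property of the ergodic idempotent gives $E\circ E'=E'$.
\item Your own observation that $E\circ R=E$ for all $R\in\Le$, applied to $R=E'$, gives $E\circ E'=E$.
\end{itemize}
Hence $E'=E$, and $\Je_\Ee=\Je$ follows at once. So $E\circ R=E$ does not give $R(a)=a$ for a general $R$, but it already suffices for the particular choice $R=E'$. Neither your $L^2$ argument (that $R_2^*$ and $R_2$ have the same fixed vectors in $\mathrm{Ran}\,E_2$) nor the appeal to uniqueness of the $\omega$-preserving conditional expectation onto $\Je$ is needed. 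Your detour is nonetheless valid.

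Your explicit check that any factorization $E=\tilde E'\circ F'$ forces $E\circ F'=E$, and hence $\omega\circ F'=\omega$, usefully spells out the uniqueness claim. The paper leaves that claim to the brief remark at the end of the proof of Theorem~\ref{thm:minsuf}.
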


\begin{proof} Let us consider the experiment $\Ee_E=(\Me,\Se_E:=\{\rho\circ E: \rho\in
\states(\Me)\})$, where $\states(\Me)$ denotes the set of normal states of $\Me$.  Since $\Me$ is $\sigma$-finite and $E$ is faithful, there is some
faithful normal state $\omega\in \Se_E$. It is now easy to see that $\Je=\Je_{\Ee_E}$.
Indeed, let $E'$ be the minimal idempotent for the experiment $\Ee_E$, then since $E$
preserves all the states in $\Se_E$, we have $E'\circ E=E\circ E'=E'$. On the other hand,
this implies that 
 \[
\rho\circ E=\rho\circ E\circ E'=\rho\circ E',\qquad \rho\in \states(\Me),
 \]
which shows that $E=E'$, and consequently $\Je=\Je_{\Ee_E}$, $\Ae=\Ae_{\Ee_E}$. 
The rest follows by Theorem \ref{thm:minsuf}.

\end{proof}

\subsection{The structure of conditional expectations} \label{sec:structure}

We will now discuss the general form of conditional expectations $\tilde E:\Ae \to \Je$ of a von Neumann algebra $\Ae$ onto a generating JW*-subalgebra $\Je\subset \Ae$.
As discussed in Sect.~\ref{sec:prelim-CEs}, $\Je \subset \Ae$ can be decomposed as a direct sum of inclusions of the following types: \ref{it:case-i} $\Je$ is itself a von Neumann algebra, \ref{it:case-iia} $\Je$ is a reversible subalgebra with $Z(\Je)\ne Z(\Ae)$, \ref{it:case-iib} $\Je$ is a reversible subalgebra with $Z(\Je)=Z(\Ae)$, and \ref{it:case-iii} $\Je$ is not reversible.
The Jordan bimodule property (see \eqref{eq:J-bimodule}) implies that a conditional expectation $\tilde E:\Ae \to \Je$ is compatible with the direct sum decomposition, i.e., it decomposes as a direct sum of conditional expectations in each of the four cases.
Therefore, we only have to understand what conditional expectations look like in each of these cases.

\textbf{Case \ref{it:case-i}.}~
Since $\Je$ generates $\Ae$, we have $\Je=\Ae$, so the only conditional expectation $E:\Ae\to\Je$ is the identity.

\textbf{Case \ref{it:case-iia}.}~
We may suppress the isomorphism and assume $\Ae = \Be\oplus \Be^{op}$ and $\Je = \{a\oplus a^{op} : a\in \Be\}$ for a von Neumann algebra $\Be$.
Suppose $\tilde E:\Ae\to \Je$ is a conditional expectation.
Let $p=1\oplus 0\in \Ae$ and let $s$ be any symmetry in $\Je$. 
Then by the Jordan bimodule property of
$\tilde E$, we have
\[
s\tilde E(p)s=\tilde E(sps)=\tilde E(p),
\]
since $p$ is central in $\Ae$. It follows that $\tilde E(p)\in Z(\Je)$, so that there is
some positive element $\lambda\le 1$ in the center of $\Be$ such that $\tilde
E(p)=\lambda\oplus\lambda$, $\tilde E(0\oplus 1)=\tilde E(1-p)=(1-\lambda)\oplus
(1-\lambda)$. 
{Here, we identify $\lambda$ with $\lambda^{op}$ since $Z(\Be^{op})\cong Z(\Be)^{op} = Z(\Be)$.}
Further, for any $a\in \Be$, we have
\[
\tilde E(a\oplus 0)=\tilde E(p\circ (a\oplus a^{op}))=\tilde E(p)\circ (a\oplus a^{op})=\lambda
a\oplus \lambda a^{op},
\]
and similarly for $\tilde E(0\oplus b^{op})$, so that
\begin{equation}\label{eq:CEs-iia}
    \tilde E(a\oplus b^{op})=(\lambda a+(1-\lambda)b)\oplus (\lambda a^{op}+(1-\lambda)b^{op}).
\end{equation}
In turn, \eqref{eq:CEs-iia} defines a conditional expectation for every element $0\le \lambda\le 1$ in $Z(\Be)$.
Hence, the possible conditional expectations are given by \eqref{eq:CEs-iia}, and are parametrized by the set of effects in $Z(\Be)$. 
The conditional expectation $\tilde E$ given by \eqref{eq:CEs-iia} is faithful precisely when $\supp(\lambda) = \supp(1-\lambda)=1$.

\textbf{Case \ref{it:case-iib}.}~
In this case, there is an involutive *-antiisomorphism $\alpha$ on $\Ae$ such that $\Je$ is its fixed-point set.
Clearly a faithful conditional expectation 
$\tilde E:\Ae \to \Je$ is given by
\[
\tilde E=\frac12(id+ \alpha).
\]
According to \cite[Theorem 3.2]{haagerup1995positive}, this is the unique conditional expectation of $\Ae$ onto $\Je$.

\textbf{Case \ref{it:case-iii}.}~ 
If $\Je$ is not reversible, it is of type $I_2$. It is shown in
\cite[Thm.~2.1]{haagerup1995positive} that in this case a conditional expectation of $\Ae$ onto $\Je$ exists if and only if $\Ae$ is a finite von Neumann algebra.
If $\Ae$ is finite, every normal faithful tracial state $\tau$ on $\Ae$ defines a
conditional expectation onto $\Je$ via
\begin{equation}\label{eq:tracial}
    \tau(x\circ a)= \tau(\tilde E(x)\circ a),
    \qquad x\in \Ae,\  a\in \Je.
\end{equation}
It is easy to see that this is the unique $\tau$-preserving conditional expectation.
In fact, as shown in \cite{haagerup1995positive}, every faithful conditional expectation $\tilde E$ of $\Ae$ onto $\Je$ is of this form for some normal faithful tracial state $\tau$.


\subsection{$L^p$-spaces over JW*-subalgebras}

In this section, we will consider $L^p$-spaces over a  JW*-algebra $\Je$,  under the assumptions that
we keep throughout this paper. 
Specifically, we assume that $\Je$  is the range of a conditional
expectation $E$ on a $\sigma$-finite von Neumann algebra $\Me$,  with a faithful normal invariant state $\omega$. For any $p>0$,
we define the space $L^p(\Je,\omega)$ as the closure of the subspace
\begin{equation}
    h_\omega^{\frac{1}{2p}} \Je h_\omega^{\frac{1}{2p}} \subset L^p(\Me)
\end{equation}
in the corresponding (quasi) norm. For $p=\infty$, we put $L^p(\Je,\omega)=\Je$. 
Let us also denote by $L^p_h(\Je,\omega)$ the set of self-adjoint elements in
$L^p(\Je,\omega)$. We then have the following properties.

\begin{lemma}\label{lemma:lpJ}
\begin{enumerate}[(a)]
\item\label{it:lpJ1} 
$L^1(\Je,\omega)=\{h_\varphi\in L^1(\Me)\colon \varphi\circ E=\varphi\}$.
\item\label{it:lpJ2} For any $p>0$,
\[
L^p_h(\Je,\omega)=\{h=h^*\in L^p(\Me)\colon \text{ with polar decomposition } s|h|,\ s\in
\Je,\ |h|^p\in L^1(\Je,\omega)\}.
\]
\item\label{it:lpJ3} If $\frac1s = \frac1p+\frac1q+\frac1r$ with $p,q,r>0$, then 
\[
\{hkl\}\in L^s(\Je,\omega),\qquad \forall\ h\in L^p(\Je,\omega),\ k\in L^q(\Je,\omega),\ l
\in L^r(\Je,\omega).
\]
\item\label{it:lpJ4} For $p,q>0$, $k\in L^p(\Me)$ and $h=h_\mu^{\frac1{2q}}$ for a faithful
state $\mu$ invariant under $E$, we have $hkh\in
L^r(\Je,\omega)$ if and only if $k\in L^p(\Je,\omega)$, with $\frac1p+\frac1q=\frac1r$.
\item \label{it:lpJ5} For $p\ge 1$, $L^p(\Je,\omega)$ is the range of the contractive projection $E_p$ on 
$L^p(\Me)$, defined by \eqref{eq:Tp}. With
$\frac1s = \frac1p+\frac1q+\frac1r$,  $p,q,r,s \in [1,\infty]$, we have the following
extension of the Jordan bimodule
property 
\begin{equation}\label{eq:p_bimodule}
    E_s(\{hkl\}) = \{h E_q(k) l\}, \qquad h\in L^p(\Je,\omega),\ k\in L^q(\Me),\ l \in L^r(\Je,\omega).
\end{equation}

\end{enumerate}

\end{lemma}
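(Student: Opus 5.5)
The plan is to prove the five items in the order (a), (e, first part), (b), (c), (d), (e, second part). The main tools are the $L^2$/$L^p$ realization of $E$ (Proposition \ref{prop:contraction}, Lemma \ref{lemma:idempotent}) and the structure of Haagerup $L^p$-spaces.

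\textbf{Items (a) and the first part of (e).} For (a), take $a\in\Je$ and $x\in\Me$. Using $E_\omega=E$ (Lemma \ref{lemma:idempotent}) and the defining relation of the Petz dual, we get
\[
\Tr(h_\omega^{1/2}ah_\omega^{1/2}E(x)) = \Tr(h_\omega^{1/2}E(a)h_\omega^{1/2}x),
\]
so $\varphi=\Tr(h_\omega^{1/2}ah_\omega^{1/2}\,\cdot\,)$ is $E$-invariant. Invariance is preserved under $L^1$-limits, which gives the inclusion "$\subseteq$". For the converse, if $\varphi\circ E=\varphi$, then $h_\varphi=E_*(h_\varphi)=E_1(h_\varphi)$, since $E_1=(E_\omega)_*=E_*$. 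Approximating $h_\varphi$ by $h_\omega^{1/2}xh_\omega^{1/2}$ and applying the contraction $E_1$ then yields approximants $h_\omega^{1/2}E(x)h_\omega^{1/2}$, which lie in the generating subspace.

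The same argument works for every $p\ge1$. $E_p$ is a contraction (Proposition \ref{prop:contraction}), it is idempotent, and it maps the dense set $h_\omega^{1/(2p)}\Me h_\omega^{1/(2p)}$ onto $h_\omega^{1/(2p)}\Je h_\omega^{1/(2p)}$. Hence its range is exactly $L^p(\Je,\omega)$, which is the first claim of (e).

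\textbf{Items (b), (c), (d).} I would reduce these to von Neumann algebras using Corollary \ref{coro:positive_projections}. Write $E=\tilde E\circ F$ with $F$ the $\omega$-preserving conditional expectation onto $\Ae=\Je''$. Then $L^p(\Je,\omega)\subset L^p(\Ae,\omega)\cong L^p(\Ae)$, where the latter is embedded via $F$ as usual. So it suffices to work inside $\Ae$, and then the case decomposition of Section \ref{sec:structure} becomes available:
\begin{enumerate}[(i)]
\item In case (i) the statements are the known von Neumann algebra facts.
\item In case (ii.a), $L^p(\Je)$ is the set of pairs $h\oplus h^{op}$ with $h\in L^p(\Be)$.
\item In case (ii.b), $L^p(\Je)$ is the fixed-point space of the induced antiautomorphism $\alpha_p$.
\item In case (iii), $\Ae$ is finite with a trace $\tau$, and $L^p$ is realized by $\tau$-measurable operators. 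Note that $\omega=\tau(d\,\cdot)$ with $d\in L^1(\Je,\tau)$, which follows from (a).
\end{enumerate}
In each case the polar decomposition and functional calculus of a self-adjoint element stay in $\Je$: in the symmetric descriptions by inspection, and in case (iii) by closure of $\Je$ under functional calculus and Jordan operations. This gives (b). Jordan triple products of elements of these concrete $L^p$-spaces stay in the same class, which gives (c). For (c) I would alternatively argue by density: approximate by $h_\omega^{1/(2p)}a h_\omega^{1/(2p)}$ etc. and use Hölder continuity of the triple product. The difficulty is that the product of such approximants is not of the form $h_\omega^{1/(2s)}\,\cdot\,h_\omega^{1/(2s)}$, so the structural route seems necessary.

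For (d), the direction "$\Leftarrow$" follows from (c) with $h\in L^{2q}(\Je,\omega)$, using (a)–(b) to see $h_\mu^{1/(2q)}\in L^{2q}(\Je,\omega)$. For "$\Rightarrow$", write $k=h^{-1}(hkh)h^{-1}$ on a dense domain and approximate by spectral truncations of $h_\mu$ that lie in $\Je$.

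\textbf{Second part of (e), and the main obstacle.} The bimodule identity \eqref{eq:p_bimodule} is checked first on dense elements $h=h_\omega^{1/(2p)}a h_\omega^{1/(2p)}$, and similarly for $k$ and $l$. Here I expect the main obstacle: these do not combine with a single power of $h_\omega$, so they cannot be reduced directly to \eqref{eq:J-bimodule}. My first attempt would be to pass through duality instead. Pair both sides with arbitrary $m\in L^{s'}(\Je,\omega)$, and use $E_s^*=E_{s'}$ from \eqref{eq:tpq}, $E_\omega=E$, and cyclicity of $\Tr$. Since $\{m h\,\cdot\,\}$-type products stay in $\Je$-spaces by (c), this reduces the identity to
\[
\Tr(\{lmh\}k)=\Tr(\{lmh\}E_q(k)),
\]
which holds because $E_q$ is the adjoint of a projection fixing $L^{q'}(\Je,\omega)$. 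Continuity of all terms then extends the identity from the dense elements to all of $L^p$, $L^q$, $L^r$.
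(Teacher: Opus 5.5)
Your arguments for (a) and for the range statement in (e) are the paper's. The reduction of (b)--(d) to $\Ae=\Je''$ via Corollary~\ref{coro:positive_projections}, followed by the four-type decomposition, is also the paper's strategy. Your duality proof of \eqref{eq:p_bimodule} is correct and genuinely different. The paper verifies the identity separately in each case: via \eqref{eq:cp_bimodule}, via the explicit formula \eqref{eq:ep_case2}, via $\alpha_p$, and via density of $\Je$ in the tracial case. You obtain it at once from (c), $E_s^*=E_{s'}$ and cyclicity of $\Tr$. Testing against $m\in L^{s'}(\Je,\omega)$ suffices because both sides lie in $L^s(\Je,\omega)$, and $\Tr(xm')=\Tr(xE_{s'}(m'))$ for $x\in L^s(\Je,\omega)$. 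The argument applies to arbitrary $h,k,l$, so your final density step is unnecessary.

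There are two concrete problems in (b)--(d). First, your description of case (ii.a) is wrong. If $E$ is given by \eqref{eq:CEs-iia} with central $0\le\lambda\le 1$, then $h_\omega=\lambda h_{\tilde\omega}\oplus(1-\lambda)h_{\tilde\omega}^{op}$ and
\[
L^p(\Je,\omega)=\{\lambda^{1/p}h\oplus(1-\lambda)^{1/p}h^{op}\colon h\in L^p(\Be)\}.
\]
Your set $\{h\oplus h^{op}\}$ is correct only for $\lambda=\tfrac12$. For $p=1$ it already contradicts (a), since the $E$-invariant functionals are $\lambda\tilde\varphi\oplus(1-\lambda)\tilde\varphi^{op}$. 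The proofs of (b) and (d) in this case must handle these $p$-dependent weights, and they use $\supp(\lambda)=\supp(1-\lambda)=1$ (faithfulness of $E$).

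Second, and more seriously, your ``only if'' argument for (d) fails outside the tracial picture. In Haagerup $L^p$, $h=h_\mu^{1/(2q)}$ is affiliated with $\Me\rtimes_{\sigma^\omega}\mathbb R$, and its spectral projections $e_n$ are not fixed by the dual action. Hence they lie neither in $\Je$ nor in $\Me$, and $e_nke_n$ is in general not even in $L^p(\Me)$. Truncation works only in case (iii), where $\hat h$ is $\tau$-measurable and affiliated with $\Je$. Even there it needs $L^p(\Je,\omega)=\bar\Je\cap L^p(\Me)$, with $\bar\Je$ the measure closure; the paper gets this from density of $\Je$ in $L^p(\Je,\omega)$ and Ayupov's description of $\bar\Je_h$. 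It also needs a quasi-norm estimate for $e_nke_n\to k$ when $r<1$.

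For $r\ge 1$ you could close (d) uniformly with your own (e): $hkh=E_r(hkh)=hE_p(k)h$, and faithfulness of $\mu$ gives $k=E_p(k)$. But $r<1$ is part of the statement and is used in Theorems~\ref{thm:alphale} and~\ref{thm:alphagr}, and no $E_r$ exists there. Case-specific ideas are then needed:
\begin{enumerate}[(i)]
\item For a von Neumann subalgebra, which you also need for the reduction from $\Me$ to $\Ae$: conjugate $hkh$ by unitaries in the commutant of $\Je\rtimes_{\sigma^\omega}\mathbb R$ and cancel $h$. This shows $k$ is affiliated with that crossed product.
\item In case (ii.b): $\alpha_r(hkh)=h\alpha_p(k)h$ gives $\alpha_p(k)=k$.
\item In case (ii.a): an explicit computation with the correct weights above.
\end{enumerate}
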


Notice that \ref{it:lpJ1} and \ref{it:lpJ2} in the above lemma imply that
$L^p(\Je,\omega)$ only depends on the conditional expectation $E$, that is,
$L^p(\Je,\omega)=L^p(\Je,\mu)$ for any faithful normal state $\mu$ invariant under $E$.

\begin{proof} Notice first that for $p\ge 1$, the map $E_p$ is a projection on $L^p(\Me)$,
with range equal to $L^p(\Je,\omega)$.
Indeed,  $h_\omega^{\frac{1}{2p}} \Je h_\omega^{\frac{1}{2p}}$ is
clearly included in the range of $E_p$, so that $L^p(
\Je,\omega)\subseteq E_p(L^p(\Me))$. For the converse, let $x\in L^p(\Me)$, then there is a
sequence $(y_n)_n\subset \Me$ with $x=\lim_n h_\omega^{\frac1{2p}}y_nh_{\omega}^{\frac1{2p}}$, so that, by \eqref{eq:Tp}, 
$E_p(x)=\lim h_{\omega}^{\frac1{2p}} E(y_n) h_{\omega}^{\frac1{2p}} \in L^p(\Je,\omega)$. 
In particular, since $E_1=E_*$, this implies the statement \ref{it:lpJ1}. 

The strategy for the rest of the proof is as follows. 
We first observe that if $\Je$ is a von Neumann subalgebra (that is, $E$ is completely
positive), then we may identify $L^p(\Je,\omega)\simeq L^p(\Je)$, and all the statements hold in this case.
Thus, using the decomposition $E=\tilde E\circ F$ in Corollary \ref{coro:positive_projections}, we may reduce to the case when $\Me=\Je''$.
We then consider the decomposition $\Je=\Je_{\text{(i)}}\oplus\Je_{\text{(ii.a)}}\oplus \Je_{\text{(ii.b)}} \oplus \Je_{\text{(iii)}}$ as in Section \ref{sec:prelim-CEs}.
Since the central projections of $\Je$ are also central in $\Me$, we have a corresponding direct sum decomposition of $\Me$, and hence of $L^p(\Me)$, with $E_p$ acting separately on each of the components. 
It is therefore enough to consider the four cases \ref{it:case-i}, \ref{it:case-iia}, \ref{it:case-iib}, and \ref{it:case-iii} separately.
In case \ref{it:case-i}, $\Je$ is a von Neumann subalgebra, so we will have already solved this case.
For the remaining cases, we use the structure of conditional expectations determined in  Section \ref{sec:structure}. 

So assume that $\Je$ is a von Neumann subalgebra in $\Me$. Since $\Je$ is the range of
a completely positive conditional expectation $E$ with $\omega\circ E=\omega$, $\Je$ is invariant under the
modular group $\sigma^\omega$ and we have $\sigma^{\omega|_\Je}=\sigma^\omega|_\Je$,
\cite{takesaki1972conditional}.
By the results of \cite{junge2003noncommutative}, 
the crossed product $\mathcal T:=\Je\rtimes_{\sigma^{\omega|_\Je}} \mathbb R$
 becomes a subalgebra in $\mathcal R:=\Me\rtimes_{\sigma^\omega}\mathbb R$ and $E\otimes
 id_{L^2(\mathbb R)}$ is a conditional expectation on $\mathcal R$ with range $\mathcal
 T$. The canonical trace $\tau$  and the dual action $\theta$ on $\mathcal T$ are obtained by restriction from $\mathcal R$.
 It follows that the space $\tilde{\mathcal T}$ of $\tau$-measurable operators affiliated
 with
 $\mathcal T$ is a subspace in  $\tilde{\mathcal R}$, and $L^p(\Je)\subseteq
 L^p(\Me)$. Further, for  $p\ge 1$,  $L^p(\Je)$  is the range of a projection $E_p$ in
 $L^p(\Me)$ extending $E$ and the projections  satisfy the extended bimodule property
 \begin{equation}\label{eq:cp_bimodule}
E_s(hkl)=hE_q(k)l,\qquad h\in L^p(\Je),\ k\in L^q(\Me),\ l\in L^r(\Je),
 \end{equation}
with $p,q,r,s\in [1,\infty]$ and $\frac1p+\frac1q+\frac1r=\frac1s$. 

Since the subspace $h_\omega^{\frac1{2p}} \Je
h_\omega^{\frac1{2p}}$ is dense in $L^p(\Je)$ for all $p>0$ (see e.g. \cite[Lemma
A.1]{hiai2024alphaz}), we see that $L^p(\Je)$ coincides with our definition of
$L^p(\Je,\omega)$. The bimodule property \eqref{eq:cp_bimodule} shows that the
projections $E_p$ satisfy  \eqref{eq:Tp} and also proves  the statement \ref{it:lpJ5} in this case.
 Items \ref{it:lpJ2} and \ref{it:lpJ3} follow easily from the usual properties of Haagerup
$L_p$-spaces.  

Let $h$  and
$k$ be as in \ref{it:lpJ4}. Then $h\in L^{2q}(\Me)$ by \ref{it:lpJ1} and \ref{it:lpJ2}.
Using  the previous paragraphs, the statement of
\ref{it:lpJ4} can be equivalently formulated as:  $hkh$ is affiliated with $\mathcal T$
if and only if $k$ is. So assume that $hkh$ is affiliated with $\mathcal T$, and let $u$ be
any unitary in the commutant $\mathcal T'$. We then have
\[
hkh=u^*(hkh)u=(u^*hu)(u^*ku)(u^*hu)=h(u^*ku)h,
\]
where the last equality follows from $h\in L^{2q}(\Je)$ and therefore is affiliated with
$\mathcal T$. Since $h=h_\mu$ for a faithful normal state $\mu$, this implies that
$k=u^*ku$, for any $u\in \mathcal T'$, that is, {$k$ is affiliated with $\mathcal T$}. 
The converse statement in \ref{it:lpJ4} follows from \ref{it:lpJ3}.
This finishes the proof for a von Neumann subalgebra $\Je$.

\textbf{Case \ref{it:case-iia}.}~
We may assume that $\Me=\Be\oplus\Be^{op}$ and $\Je=\{a\oplus a^{op}\colon
a\in \Be\}$. Note that the involutive *-antiisomorphism  $op: \Be\to \Be^{op}$ extends to a
positive  isometry $op: L^p(\Be)\to L^p(\Be^{op})$ such that
$(h^{1/p}_\varphi)^{op}=h^{1/p}_{\varphi^{op}}$, $(x^*)^{op}=(x^{op})^*$ and $(xy)^{op}=y^{op}x^{op}\in L^r(\Be^{op})$ for $x\in
L^p(\Be)$ and $y\in L^q(\Be)$, $\frac1p+\frac1q=\frac1r$. It follows that we have 
\begin{equation}\label{eq:Lpcase2}
L^p(\Me)=L^p(\Be)\oplus L^p(\Be^{op})=\{x\oplus y^{op}\colon x,y\in L^p(\Be)\}.
\end{equation}
By Sect.~\ref{sec:structure}, the conditional expectation $E$ is given by \eqref{eq:CEs-iia} for an element $0\le \lambda\le1$ in $Z(\Be)$.
Let $\varphi\in \Me_*$, then $\varphi=\varphi_1\oplus\varphi_2^{op}$, with
$\varphi_1,\varphi_2\in \Be_*$. 
We have for $a\in \Be$
\[
\varphi\circ E(a\oplus 0)=\varphi(\lambda a\oplus \lambda a^{op})=(\varphi_1+\varphi_2)(\lambda a)
\]
and 
\[
\varphi\circ E(0\oplus a^{op})=\varphi((1-\lambda) a\oplus (1- \lambda)
a^{op})=(\varphi_1+\varphi_2)((1-\lambda) a).
\]
It follows that $\varphi$ is invariant under $E$ if and only if $\varphi_1=\tilde
\varphi(\lambda\,\cdot\,)$ and $\varphi_2=\tilde \varphi((1-\lambda)\,\cdot\,)$ for some
$\tilde\varphi\in \Be_*$, equivalently, 
$\varphi_1((1-\lambda)\,\cdot\,)=\varphi_2(\lambda\,\cdot\,)$. Using \ref{it:lpJ1}, we obtain
\begin{equation}\label{eq:L1case2}
L^1(\Je,\omega)=\{\lambda h\oplus (1-\lambda)h^{op}\colon h\in L^1(\Be)\}.
\end{equation}
Since $\omega\circ E=\omega$, we have $h_\omega=\lambda h_{\tilde\omega}\oplus
(1-\lambda)h_{\tilde\omega^{op}}$, for some faithful $\tilde\omega\in \Be_*^+$.
For any $a\oplus a^{op}\in \Je$ and $p>0$, we get
\[
h_\omega^{\frac1{2p}}(a\oplus
a^{op})h_\omega^{\frac1{2p}}=\lambda^{\frac1p}h_{\tilde\omega}^{\frac1{2p}}ah_{\tilde\omega}^{\frac1{2p}}\oplus
(1-\lambda)^{\frac1p}(h_{\tilde\omega}^{\frac1{2p}}ah_{\tilde\omega}^{\frac1{2p}})^{op},
\]
which implies that
\begin{equation}\label{eq:LpJcase2}
L^p(\Je,\omega)=\{\lambda^{\frac1p}h\oplus (1-\lambda)^{\frac1p}h^{op}\colon h\in
L^p(\Be)\}.
\end{equation}
From \eqref{eq:L1case2} and \eqref{eq:LpJcase2}, using polar decomposition in $L^p(\Be)$,
we obtain \ref{it:lpJ2}. The statement \ref{it:lpJ3} follows by direct computation from
\eqref{eq:LpJcase2}.  

Let now $h$ and $k$ be as in \ref{it:lpJ4}, then 
\[
h=\lambda^{\frac1{2q}}a\oplus
(1-\lambda)^{\frac1{2q}}a^{op},
\]
with $a=h_{\tilde\mu}^{\frac1{2q}}$, for some faithful $\tilde\mu\in \Be_*^+$, and
$k=x\oplus y^{op}$, $x,y\in L^p(\Be)$.
 Assume that 
\[
hkh=\lambda^{\frac1q}axa\oplus (1-\lambda)^{\frac1q}(aya)^{op}\in L^r(\Je,\omega).
\]
By \eqref{eq:LpJcase2}, there must be some $b\in L^r(\Be)$ such that 
\[
\lambda^{\frac1q}axa=\lambda^{\frac1q+\frac1p}b,\qquad
(1-\lambda)^{\frac1q}aya=(1-\lambda)^{\frac1q+\frac1p}b.
\]
Since $E$ is faithful, we have $\supp(\lambda)=\supp(1-\lambda)=1$.  This implies 
$axa=\lambda^{\frac1p}b$, $aya=(1-\lambda)^{\frac1p}b$, or 
\[
(1-\lambda)^{\frac1p}axa=\lambda^{\frac1p}aya.
\]
Since $\tilde \mu$ is faithful, this implies that
$(1-\lambda)^{\frac1p}x=\lambda^{\frac1p}y$. Taking the real and imaginary parts of this
operator, it is enough to assume that $x$ and $y$ are self-adjoint. Let $x=s|x|$ and $y=t|y|$ be the polar
decompositions in $L^{p}(\Be)$, then it follows 
that $s=t$ and $(1-\lambda)^{\frac1p}|x|=\lambda^{\frac1p}|y|$.
We thus have the polar decomposition $k=(s\oplus s^{op})(|x|\oplus |y|^{op})$, where
$|x|=h_{\rho}^{\frac1p}$, $|y|=h_\sigma^{\frac1p}$ for some $\rho,\sigma\in \Be_*^+$
satisfying $\rho((1-\lambda)\,\cdot\, )=\sigma(\lambda\,\cdot\,)$. This  implies that $\rho\oplus \sigma^{op}$ is invariant under $E$, in
other words, $|k|^p=h_{\rho\oplus \sigma^{op}}\in L^1(\Je,\omega)$. Since $s\oplus s^{op}
\in \Je$, we obtain $k\in L^p(\Je,\omega)$ by \ref{it:lpJ2}. The converse follows from
\ref{it:lpJ3}. 

To show \ref{it:lpJ5}, note that for any $a,b\in \Be$ and $p\ge 1$, we have
\[
h_\omega^{\frac1{2p}}(a\oplus b^{op})h_\omega^{\frac1{2p}}=\lambda^{\frac1p}h_{\tilde
\omega}^{\frac1{2p}}ah_{\tilde \omega}^{\frac1{2p}}\oplus (1-\lambda)^{\frac1p}h_{\tilde
\omega}^{\frac1{2p}}b^{op}h_{\tilde \omega}^{\frac1{2p}}
\]
and
\[
E_p(h_\omega^{\frac1{2p}}(a\oplus b^{op})h_\omega^{\frac1{2p}})=\lambda^{\frac1p}h_{\tilde
\omega}^{\frac1{2p}}(\lambda a+(1-\lambda)b)h_{\tilde \omega}^{\frac1{2p}}\oplus (1-\lambda)^{\frac1p}h_{\tilde
\omega}^{\frac1{2p}}(\lambda a+(1-\lambda)b)^{op}h_{\tilde \omega}^{\frac1{2p}}.
\]
It is easily seen that this map extends to
\begin{equation}\label{eq:ep_case2}
E_p(x\oplus y^{op})=(\lambda x+ \lambda^{\frac1p}(1-\lambda)^{\frac1{p'}}y)\oplus
(\lambda^{\frac1{p'}}(1-\lambda)^{\frac1p}x+(1-\lambda)y)^{op},\quad x,y\in L^p(\Be),
\end{equation}
here $\frac1p+\frac1{p'}=1$. The bimodule property \eqref{eq:p_bimodule} follows by a
direct computation from \eqref{eq:ep_case2}.

\textbf{Case \ref{it:case-iib}.}~
Let $\alpha:\Me\to \Me$ be an involutive *-antiisomorphism such that
$E=\frac12(id+\alpha)$ and $\Je$ is the set of fixed points of $\alpha$. 
Then $\alpha$ defines an involutive isometric positive map $\alpha_p$ on $L^p(\Me)$ for
each $p>0$ such that 
\begin{enumerate}[(1)]
\item $\alpha_\infty=\alpha$, $\alpha_1(h_\varphi)=h_{\varphi\circ\alpha}$,
\item \label{it:posit} $\alpha_p(x^*)=\alpha_p(x)^*$ and  for positive elements, $\alpha_p(h_\varphi^{1/p})=h_{\varphi\circ\alpha}^{1/p}$,
\item $\alpha_r(xy)=\alpha_q(y)\alpha_p(x)$, for $p,q,r$ as in \ref{it:lpJ4}  and $x\in L^p(\Me)$,
$y\in L^q(\Me)$.
\end{enumerate}
By the assumption, $\omega\circ\alpha=\omega$ and the above map coincides with our
definition of a map extension in \eqref{eq:Tp}. Indeed, using (2) and (3), we have
\[
\alpha_p(h_\omega^{\frac1{2p}}ah_\omega^{\frac1{2p}})=h_{\omega\circ\alpha}^{\frac1{2p}}\alpha(a)h_{\omega\circ\alpha}^{\frac1{2p}}=h_\omega^{\frac1{2p}}\alpha(a)h_\omega^{\frac1{2p}}.
\]
It follows that $E_p=\frac12(id+\alpha_p)$ extends $E$ for every $p>0$, and
$L^p(\Je,\omega)$ is the range of $E_p$, that is, precisely the set of
elements invariant under $\alpha_p$.

To prove \ref{it:lpJ2}, let  $h\in L_h^p(\Me)$, then there is some $\varphi\in
\Me_*^+$ and a symmetry $s\in \Me$ such that $h=sh_\varphi^{1/p}=h_\varphi^{1/p}s$.
Then $h\in L^p(\Je,\omega)$ if and only if $h=\alpha_p(h)=\alpha(s)\alpha_p(|h|)$. By the uniqueness of the polar
decomposition, this is equivalent to $s=\alpha(s)\in \Je$ and $|h|=\alpha_p(|h|)\in
L^p(\Je,\omega)$. By the property \ref{it:posit}, $h_\varphi^{1/p}\in L^p(\Je,\omega)$ if and only if
$\varphi\circ\alpha=\varphi$, which means that $|h|^p=h_{\varphi}\in L^1(\Je,\omega)$, by
\ref{it:lpJ1}. The rest of the statements follow easily by the properties of $\alpha_p$.

\textbf{Case \ref{it:case-iii}.}~
Let $\Je$ be of type $I_2$ and let $\tau$ be a faithful normal tracial
state on $\Me$ such that $E$ is the $\tau$-preserving conditional expectation. 

Since we have a faithful tracial state, we consider the conventional $L^p$-spaces $L^p(\Me,\tau)$ instead of the Haagerup $L^p$-spaces $L^p(\Me)$.
We refer to \cite{terp1981lpspaces} (see also \cite[Sec.~A.6]{hiai2018quantum}) for a discussion of the correspondence between the two, which we denote $L^p(\Me)\ni h \leftrightarrow \hat h \in L^p(\Me,\tau)$ here.
Importantly, the correspondence respects the multiplication $L^p \times L^q \to L^r$ with
$\frac 1p+\frac1q=\frac1r$ and the power structure. In particular, $(h^p)^\wedge =
\hat h^p$, $h\in L^1(\Me)^+$. Moreover, for $\varphi\in \Me_*$, $\hat h_\varphi$ is
determined as $\varphi(a)= \tau(\hat h_\varphi a)$, $a\in \Me$. 

The statement \ref{it:lpJ5} is now easy to prove. 
In the conventional $L^p$-spaces, we have $\Me\subset L^p(\Me,\tau)$ and for $p\ge 1$ the maps $\hat
E_p$, corresponding to $E_p$, coincide with $E$ on $\Me$. 
It follows that $\Je$ is dense in
$L^p(\Je,\omega)$ in this case.
For $\hat k \in \Me$, $\hat h,\hat l\in \Je$, the Jordan bimodule property
\eqref{eq:J-bimodule} of $E$ implies that the statement in \eqref{eq:p_bimodule} holds for the
corresponding elements $h$, $k$ and $l$.
Since the inclusions $\Me\subset L^p(\Me,\tau)$, and $\Je\subset L^p(\Je,\omega)$ are
norm-dense, continuity of the multiplication and the
maps $E_p$ extend the claims to arbitrary elements $k\in L^p(\Me)$,  $h\in
L^p(\Je,\omega)$ and $l\in L^r(\Je,\omega)$.

We will now show
that $L^p(\Je,\omega)$ is the closure of $\Je\subseteq \Me$ in $L^p(\Me)$, for all $p>0$. 
Indeed, since this is true for $L^1(\Je,\omega)$ by the previous paragraph, there is
a sequence $h_n\in \Je$ such that $\|h_n-\hat h_\omega\|_1\to 0$. By continuity of the absolute
value in $L^1(\Me)$, we may assume that $h_n\ge 0$ for all $n$. By \cite[Lem.~6]{kato2024onrenyi}, $\|h_n^{\frac1p}-\hat h_\omega^{\frac1p}\|_p\to 0$ for all $p>0$.
Using H\"older inequality, it follows that $h_n^{\frac1{2p}}ah_n^{\frac1{2p}}\to \hat
h_\omega ^{\frac1{2p}}a\hat h_\omega^{\frac1{2p}}$ in $L^p(\Me)$ for any $p>0$ and $a\in
\Je$, this implies that $\Je$ is dense in $L^p(\Je,\omega)$. By continuity of the
multiplication  $L^p \times L^q \times L^r\to L^s$, this implies \ref{it:lpJ3}. 

To show \ref{it:lpJ2} and \ref{it:lpJ4}, we further observe that for all $p>0$, 
$L^p(\Je,\omega)=\bar\Je\cap L^p(\Me)$, where $\bar \Je$ is the closure of $\Je$ in the
topology of convergence in measure induced by $\tau$ on $\Me$. 
Indeed, since the topology in  $L^p(\Me)$ is stronger than the measure topology, we have the inclusion $L_p(\Je,\omega)\subseteq \bar \Je\cap L^p(\Me)$. Conversely, it follows by the results of \cite[Thm.~3.5]{ayupov1984integration} that the self-adjoint part $\bar\Je_h$ of
$\bar\Je$ coincides with the set of all self-adjoint operators affiliated with $\Je$, in the sense that all the spectral projections are contained in $\Je$. But any such operator $h$ is in $L^p(\Je,\omega)$, since the sequence of truncated operators $h_n:=h\chi_{[-n,n]}(h)$ is contained in $\Je$ and converges to $h$ in $L^p(\Me)$. It follows that $L^p_h(\Je,\omega)=\bar\Je_h\cap L^p(\Me)$, which implies the statement.

Let now $h\in L^p_h(\Je,\omega)=\bar\Je_h\cap L^p(\Me)$, and let $h=\int \lambda
de_\lambda$ be the spectral decomposition. Put
\[
s=\mathrm{sgn}(h),\qquad |h|=\int |\lambda|de_\lambda,
\]
then $h=s|h|$ is the polar decomposition. Since $h$ is affiliated with $\Je$, we have
$e_\lambda\in \Je$ for all $\lambda$, so that $s\in \Je$ and $|h|\in \bar\Je_h$. It follows
that $|h|^p\in \bar\Je_h$ as well, and since $\tau(|h|^p)=\|h\|_p^p<\infty$, we have
$|h|^p\in \bar\Je\cap L^1(\Me)=L^1(\Je,\omega)$. For the converse, let $h=s|h|$ with $s\in
\Je$ and $|h|^p\in L^1(\Je,\omega)$, then similarly as before, there is some sequence $h_n\in \Je^+$ such 
 that $h_n\to |h|^p$ in $L^1(\Me)$. Using again \cite[Lem.~6]{kato2024onrenyi}, and
 continuity of the multiplication in $L^p(\Me)$, we get that 
 $s\circ h_n^{\frac1p}\to s\circ |h|=h$ in $L^p(\Me)$. Since $s\circ h_n^{1/p}\in \Je$ for
 all $n$, this proves \ref{it:lpJ2}. 

Finally, let $h$ and $k$ be as in \ref{it:lpJ4} and assume that $hkh\in L^r(\Je,\omega)$.
Note that if $r\ge 1$, then $p,q\ge 1$ as well and the statement follows from
\eqref{eq:p_bimodule} and the fact that $\mu$ is faithful. Let now $r<1$, and assume
first that $p\le 2$. Taking the real
and imaginary parts of $k$, it is clearly enough to assume that $k=k^*$, so that $k$ has
the polar decomposition 
$k=sl^{1/p}$ for some $l=h_\varphi\in L^1(\Me)^+$ and some symmetry $s$ commuting
with $l$. By the assumption and the above paragraph,   $hkh\in \bar \Je$. Since $h$ is affiliated with $\Je$, the truncations 
$h_n:=h \chi_{[\frac1n,n]}(h)$ are positive invertible operators in $\Je$, and we have
$h_n^{-1}h=e_n$, the corresponding spectral projections. Since $\mu$ is faithful,  
$e_n\nearrow 1$. We then have
\[
e_nke_n=h_n^{-1}(hkh)h_n^{-1}\in \bar \Je.
\]
Since we also have $e_nke_n\in L^p(\Me)$, it follows that $e_nke_n\in \bar\Je\cap
L^p(\Me)=L^p(\Je,\omega)$. It is now enough to show that $e_nke_n\to k$ in $L^p(\Me)$.
 We have
\[
\|k-e_nke_n\|_p=\|(1-e_n)k+e_nk(1-e_n)\|_p\le C(\|(1-e_n)k\|_p+\|e_nk(1-e_n)\|_p)
\]
(note that $\|\cdot\|_p$ may be only a  quasi-norm). We then write, with
$\frac1z=\frac1p-\frac12$,
\[
\|(1-e_n)k\|_p=\|(1-e_{n})l^{\frac12}l^{\frac1z}s\|_p\le
\|(1-e_{n})l^{\frac12}\|_2\|l^{\frac1z}s\|_z
=\varphi(1-e_n)^{\frac12} \|l^{\frac1z}s\|_z\to 0
\]
and similarly 
\[
\|e_nk(1-e_n)\|_p\le \|sl^{\frac1z}\|_z\varphi(1-e_n)^{\frac12} \to 0.
\]

If $p>2$, then, with  the dual parameter $p'$ with $\frac1p+\frac1{p'}=1$, there is some
$\tilde q>0$ such that $\frac1{p'}+\frac1{\tilde q}=\frac1q$. Putting
$k'=h_\mu^{\frac1{2p'}}kh_\mu^{\frac1{2p'}}$, we have
\[
hkh= h_\mu^{\frac1{2\tilde q}}k'h_\mu^{\frac1{2\tilde q}}, 
\]
and applying the previous arguments (with $p=1$), we obtain that $k'\in L^1(\Je,\omega)$. Using
\eqref{eq:p_bimodule}, this implies that $k\in L^p(\Je,\omega)$. The converse statement in
\ref{it:lpJ4} follows from \ref{it:lpJ3}.

\end{proof}

\section{$\alpha$-$z$-R\'enyi divergences and sufficient NUP maps}

The $\alpha$-$z$-R\'enyi divergences were introduced in finite-dimensional quantum information theory in \cite{audenaert_-z-renyi_2015} and were generalized to the context of von Neumann algebras
in \cite{kato2024onrenyi}. For any $\varphi,\psi\in \Me^+_*$, $\psi\ne 0$, and 
$\alpha,z>0$, $\alpha\ne 1$, we define  
\[
D_{\alpha,z}(\psi\|\varphi):=\frac1{\alpha-1}\log
\frac{Q_{\alpha,z}(\psi\|\varphi)}{\psi(1)},
\]
where
\[
Q_{\alpha,z}(\psi\|\varphi):=\begin{dcases} \Tr
\left(h_\varphi^{\frac{1-\alpha}{2z}}h_\psi^{\frac{\alpha}{z}}h_\varphi^{\frac{1-\alpha}{2z}}\right)^z, &
\text{if } 0<\alpha<1,\\[0.3em]
\|x\|_z^z, & \text{if } \alpha>1 \text{ and }
h_\psi^{\frac{\alpha}{z}}=h_\varphi^{\frac{\alpha-1}{2z}}xh_\varphi^{\frac{\alpha-1}{2z}}
\ \text{with}\\ & x\in s(\varphi)L^z(\Me)s(\varphi),\\[0.3em]
\infty,& \text{otherwise}.
\end{dcases}
\]
As important instances, this family includes the standard (or Petz) R\'enyi divergences 
$D_\alpha(\psi\|\varphi)=D_{\alpha,1}(\psi\|\varphi)$, and the sandwiched (or minimal)
R\'enyi divergences $\tilde D_{\alpha}(\psi\|\varphi)=D_{\alpha,\alpha}(\psi\|\varphi)$.

It was proved in \cite{hiai2024alphaz} that the $\alpha$-$z$-R\'enyi divergence satisfies the
data processing inequality: 
\begin{equation}\label{eq:dpi}
D_{\alpha,z}(\psi\circ T\|\varphi\circ T)\le D_{\alpha,z}(\psi\|\varphi)
\end{equation}
for any $\psi,\varphi\in \Me_*^+$ and any NUP map $T$, provided that the pair 
$(\alpha,z)$ is  contained in the region $R_{\text{DPI}}$ determined as
\begin{align}\label{eq:dpi_region<1}
 \max\{\alpha,1-\alpha\}\le z &\text{ if }  0<\alpha<1\\
\max\{\alpha/2,\alpha-1\}\le z\le \alpha &\text{ if } \alpha>1. \label{eq:dpi_region>1}
\end{align}
Moreover, it was shown that sufficiency of normal unital 2-positive maps can be characterized by
equality in \eqref{eq:dpi} for pairs of parameters in a region $R_{\text{rec}}$, strictly
contained in $R_{\text{DPI}}$.
Our aim in this section is to extend this statement to all
NUP maps, without the 2-positivity condition. As it was done in
\cite{luijk2026sufficiency} in the finite dimensional case, we will closely follow the
arguments of the proofs in \cite{hiai2024alphaz}, using our results on the spaces 
$L^p(\Je,\omega)$ obtained in  the previous section.
For this,  the following result will be our main tool. Note that the notion of a
sufficient NUP map naturally extends to positive normal functionals, so
that 
we do not need to assume the elements of $\Se$ to be normalized. 

\begin{lemma}\label{lemma:suff_alpha}
Let $\omega$ be a faithful normal state and let
$\mu\in \Me_*^+$. For $p,q>0$ and $\frac1r=\frac1p+\frac1q$, let $\rho\in \Me_*^+$ be such
that 
\[
h_{\rho}^{\frac1r}=h_{\omega}^{\frac1{2q}}h_\mu^{\frac1{p}}h_\omega^{\frac1{2q}}.
\]
 Let $T:\Ne\to \Me$ be an NUP map. Then  $T$
is sufficient with respect to $\{\mu,\omega\}$ if and only if $T$ is sufficient with respect to 
$\{\rho,\omega\}$.

\end{lemma}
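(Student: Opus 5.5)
The plan is to reduce sufficiency of $T$ to invariance under a single faithful conditional expectation. Then we transport that invariance through the sandwich $h_\omega^{\frac1{2q}}(\cdot)h_\omega^{\frac1{2q}}$ using Lemma \ref{lemma:lpJ}.

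As in the discussion preceding Section 3.1, we may assume without loss of generality that $\omega_0:=\omega\circ T$ is faithful, restricting to the support corner if necessary. Then the Petz dual $T_\omega$ exists, and $\Phi:=T\circ T_\omega$ is an NUP map on $\Me$ with $\omega\circ\Phi=\omega$. By Proposition \ref{prop:universal}, applied to the experiments $\{\mu,\omega\}$ and $\{\rho,\omega\}$, both of which contain the faithful state $\omega$, $T$ is sufficient for $\{\mu,\omega\}$ if and only if $\mu\circ\Phi=\mu$, and likewise for $\rho$. The condition for $\omega$ itself is automatic. Normalization of $\mu,\rho$ plays no role, as remarked before the lemma.

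Next, we apply the mean ergodic theorem \cite{thomsen1985invariant} to the point-ultraweakly closed convex hull $\Le_\Phi$ of the semigroup $\{\Phi^n\}_{n\ge0}$. This set is a commutative, convex, closed semigroup of NUP maps preserving $\omega$. It yields an idempotent $G\in\Le_\Phi$ with $G\circ\Phi=\Phi\circ G=G$. Since $G$ is idempotent, unital, positive and preserves the faithful state $\omega$, it is a faithful conditional expectation onto a JW*-subalgebra $\Je$, namely the fixed points of $\Phi$.

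We claim that for every $\nu\in\Me_*^+$ we have $\nu\circ\Phi=\nu$ if and only if $\nu\circ G=\nu$.
\begin{enumerate}[(i)]
\item If $\nu\circ\Phi=\nu$, then $\nu$ is invariant under every convex combination of the powers $\Phi^n$. By normality of $\nu$ and point-ultraweak closure, $\nu$ is invariant under all of $\Le_\Phi$, in particular under $G$.
\item Conversely, if $\nu\circ G=\nu$, then $\nu\circ\Phi=\nu\circ G\circ\Phi=\nu\circ G=\nu$.
\end{enumerate}
By Lemma \ref{lemma:lpJ}\ref{it:lpJ1}, sufficiency for $\{\nu,\omega\}$ is therefore equivalent to $h_\nu\in L^1(\Je,\omega)$.

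Now we chain the equivalences. By Lemma \ref{lemma:lpJ}\ref{it:lpJ2}, a positive element $k\in L^p(\Me)^+$ (polar part $1\in\Je$) lies in $L^p(\Je,\omega)$ if and only if $k^p\in L^1(\Je,\omega)$. Hence
\[
h_\mu\in L^1(\Je,\omega)\iff h_\mu^{\frac1p}\in L^p(\Je,\omega).
\]
Since $\omega$ is faithful and $G$-invariant, Lemma \ref{lemma:lpJ}\ref{it:lpJ4}, applied with $h=h_\omega^{\frac1{2q}}$ and $k=h_\mu^{\frac1p}$, gives
\[
h_\mu^{\frac1p}\in L^p(\Je,\omega)\iff h_\omega^{\frac1{2q}}h_\mu^{\frac1p}h_\omega^{\frac1{2q}}=h_\rho^{\frac1r}\in L^r(\Je,\omega).
\]
Applying Lemma \ref{lemma:lpJ}\ref{it:lpJ2} once more, this is equivalent to $h_\rho\in L^1(\Je,\omega)$, i.e.\ to $\rho\circ G=\rho$. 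This closes the chain.

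The step needing the most care is the ergodic one. We must ensure that the idempotent $G$ lies in the closed convex hull of the powers of $\Phi$, rather than being merely some idempotent commuting with $\Phi$, so that direction (i) holds for normal functionals. Here we rely on the version of the mean ergodic theorem used before Proposition \ref{prop:min_suf}, with $\Le$ replaced by $\Le_\Phi$.

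An alternative is to take $G$ to be the minimal idempotent $E$ of the experiment $\{\omega\}\cup\{\nu: \nu\circ\Phi=\nu\}$. However, that set depends on $\mu$ or $\rho$, so we expect the $\Phi$-ergodic construction to be the cleaner route. The remaining work is bookkeeping that the hypotheses of Lemma \ref{lemma:lpJ}, namely faithfulness of $\omega$ and its invariance under $G$, are satisfied for all $p,q>0$, including the quasi-normed range $r<1$. This is already covered by Lemma \ref{lemma:lpJ}\ref{it:lpJ4}.
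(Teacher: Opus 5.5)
Your argument is correct. Its $L^p$ part, the chain through items \ref{it:lpJ1}, \ref{it:lpJ2} and \ref{it:lpJ4} of Lemma~\ref{lemma:lpJ} with $h=h_\omega^{1/(2q)}$, is the same as the paper's. The difference lies in how sufficiency is converted into invariance under a conditional expectation.

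The paper does not use the Petz dual here. For $\{\mu,\omega\}\Rightarrow\{\rho,\omega\}$ it takes the expectation $E$ onto the minimal sufficient JW*-subalgebra $\Je_{\Ee}$ of $\Ee=(\Me,\{\mu,\omega\})$ (Proposition~\ref{prop:min_suf}). It then checks $\rho\circ E=\rho$ via Lemma~\ref{lemma:lpJ} and concludes with Corollary~\ref{coro:suff_ext}. The converse repeats this with a second expectation $F$ onto $\Je_{\Fe}$, where $\Fe=(\Me,\{\rho,\omega\})$.

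You instead use Proposition~\ref{prop:universal} to encode sufficiency for every pair $\{\nu,\omega\}$ in the single map $\Phi=T\circ T_\omega$. The ergodic idempotent $G$ of $\{\Phi^n\}$ then gives one expectation, independent of $\mu$ and $\rho$, that serves both directions. This buys you a symmetric chain of equivalences and a stronger by-product: $T$ is sufficient for $\{\nu,\omega\}$ exactly when $h_\nu\in L^1(\Je,\omega)$, for the fixed algebra $\Je=G(\Me)$, which is the fixed-point algebra of $T\circ T_\omega$.

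The cost is twofold. First, you need the reduction to faithful $\omega\circ T$ in order to define $T_\omega$, which the paper's proof of this lemma does not require. Second, you need a separate mean ergodic argument for the cyclic semigroup. For that step, note that point-ultraweak limits of the $\Phi$-averages are normal because they preserve the faithful normal state $\omega$. This is what makes $\Le_{\Phi}$ a closed commutative semigroup of NUP maps, and $G$ a genuine normal conditional expectation.
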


\begin{proof} Let $\Ee=(\Me,\{\mu,\omega\})$ and let $E$ be the conditional expectation
onto the minimal sufficient JW*-subalgebra $\Je_{\Ee}$. Then $\omega\circ E=\omega$ and
$\mu\circ E=\mu$, so that by Lemma \ref{lemma:lpJ}, we obtain that $h_\rho^{\frac1r}\in
L^r(\Je_\Ee,\omega)$, which implies that also $\rho\circ E=\rho$. 
Conversely, let $\Fe=(\Me,\{\rho,\omega\})$ and let $F$ be the conditional expectation
onto $\Je_{\Fe}$. Then $\omega$ and $\rho$ are invariant under $F$ and Lemma \ref{lemma:lpJ}
implies that  $h_{\omega}^{\frac1{2q}}h_\mu^{\frac1{p}}h_\omega^{\frac1{2q}} \in
L^r(\Je_\Fe,\omega)$,  consequently, $h_\mu^{\frac1p}\in L^p(\Je_\Fe,\omega)$, so that
$\mu\circ F=\mu$. The statement now follows from Corollary \ref{coro:suff_ext}.

\end{proof}

\subsection{The sandwiched R\'enyi divergence}

Let us first summarize the results obtained in \cite{jencova2018renyi} for the  sandwiched R\'enyi divergence $\tilde D_\alpha$ with $\alpha>1$. Let
$\rho$ and $\omega$ be a pair of normal states on $\Me$, or more generally
$\rho,\sigma\in \Me_*^+$. Then $\tilde D_\alpha(\rho\|\omega)<\infty$ if and only if there
is some $\mu\in \Me_*^+$ such that
\[
h_\rho=h_\omega^{\frac{\alpha-1}{2\alpha}}h_\mu^{\frac1{\alpha}}h_\omega^{\frac{\alpha-1}{2\alpha}},
\]
in which case $\tilde
Q_\alpha(\rho\|\omega):=Q_{\alpha,\alpha}(\rho\|\omega)=\|h_\mu^{\frac1{\alpha}}\|_\alpha^\alpha=\mu(1)$.
Equivalently, $h_\rho$ is contained in the symmetric Kosaki $L^p$-space
$L^\alpha(\Me\|\omega)$
\cite{kosaki1984applications} and $\tilde
Q_\alpha(\rho,\omega)=\|h_\rho\|_{\alpha,\omega}^\alpha$, where
$\|\cdot\|_{\alpha,\omega}$ is the norm in $L^\alpha(\Me\|\omega)$. 
Let $T:\Ne\to \Me$ be an NUP map, then its preadjoint  defines a contraction
$T_*:L^\alpha(\Me\|\omega)\to L^\alpha(\Ne\|\omega\circ T)$, which directly leads to the data
processing inequality
\begin{equation}\label{eq:dpi_sandwiched}
\tilde Q_\alpha(\rho\circ T\|\omega\circ T)=\|T_*(h_\rho)\|^\alpha_{\alpha,\omega\circ T}\le
\|h_\rho\|^\alpha_{\alpha,\omega}=\tilde Q_\alpha(\rho\|\omega).
\end{equation}

The equality case in \eqref{eq:dpi_sandwiched} is easily characterized for 
$\alpha=2$. Namely, $L^2(\Me\|\omega)$ is a Hilbert space and the adjoint map of the
contraction $T_*$ is precisely the Petz recovery map  $(T_\omega)_*$. The following is
then an easy consequence of the properties of contractions on Hilbert spaces.

\begin{lemma}\cite[Cor.~3.17]{jencova2018renyi} \label{lemma:sandwiched_2} Let
$\rho,\omega \in \Me_*^+$ be such that $\tilde D_2(\rho\|\omega)<\infty$. Then 
$\tilde D_2(\rho\|\omega)=\tilde D_2(\rho\circ T\|\omega\circ T)$ if and only if
$\rho\circ T\circ T_\omega=\rho$, that is, $T$ is sufficient with respect to
$\{\rho,\omega\}$. 

\end{lemma}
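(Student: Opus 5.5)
The plan is to realize $\tilde Q_2$ as a squared Hilbert-space norm and to read equality in the data processing inequality as norm preservation under a contraction, in the same spirit as the proof of Proposition \ref{prop:universal}. As above, I will assume that $\omega$ and $\omega_0:=\omega\circ T$ are faithful; the general case reduces to this by passing to support corners. Finiteness of $\tilde D_2(\rho\|\omega)$ means that $h_\rho=h_\omega^{\frac14}h_\mu^{\frac12}h_\omega^{\frac14}$ for some $\mu\in\Me_*^+$, so $\xi:=h_\mu^{\frac12}\in L^2(\Me)$ and $\tilde Q_2(\rho\|\omega)=\|\xi\|_2^2$.

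First I will check that $T_*(h_\rho)=h_{\omega_0}^{\frac14}\,\eta\, h_{\omega_0}^{\frac14}$ with $\eta:=T_2^*\xi=(T_\omega)_2(\xi)\in L^2(\Ne)$, using \eqref{eq:tpq}. Concretely, I will test against $h_{\omega_0}^{\frac14}a h_{\omega_0}^{\frac14}$ for $a\in\Ne$: the pairing $\Tr(T_*(h_\rho)a)=\Tr(h_\rho T(a))=\langle \xi, T_2(h_{\omega_0}^{\frac14}ah_{\omega_0}^{\frac14})\rangle$ identifies the element. Since $T_*(h_\rho)\ge0$ and $\omega_0$ is faithful, $\eta$ is positive. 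Hence $\tilde Q_2(\rho\circ T\|\omega_0)=\|T_2^*\xi\|_2^2$, which recovers the data processing inequality because $T_2^*$ is a contraction.

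Equality then reads $\|T_2^*\xi\|_2=\|\xi\|_2$. For a contraction on a Hilbert space, this is equivalent to $T_2T_2^*\xi=\xi$: expand $\|\xi-T_2T_2^*\xi\|^2=\|\xi\|^2-2\|T_2^*\xi\|^2+\|T_2T_2^*\xi\|^2\le \|\xi\|^2-\|T_2^*\xi\|^2$. Translating back, $T_2T_2^*=(T\circ T_\omega)_2$, and because sandwiching by $h_\omega^{\frac14}$ intertwines the $L^2$-level map with the $L^1$-level map $(T\circ T_\omega)_*$, the condition becomes $(T_\omega)_*T_*(h_\rho)=h_\rho$, i.e.\ $\rho\circ T\circ T_\omega=\rho$. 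Since $\omega\circ T\circ T_\omega=\omega$ always holds, this says precisely that $T$ is sufficient for $\{\rho,\omega\}$ with recovery map $T_\omega$. For the converse, sufficiency implies $\rho\circ T\circ T_\omega=\rho$ by Proposition \ref{prop:universal}, and then applying the data processing inequality to $T_\omega$ gives the reverse inequality, hence equality.

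The main obstacle is the intertwining step: showing rigorously that $\Phi:\xi\mapsto h_\omega^{\frac14}\xi h_\omega^{\frac14}$ satisfies $T_*\circ\Phi=\Phi_0\circ T_2^*$ on all of $L^2(\Me)$, not just on the dense set $h_\omega^{\frac14}\Me h_\omega^{\frac14}$. I will first verify it on that dense set using the defining relation of $T_\omega$. I will then extend it by continuity, using that $\Phi:L^2\to L^1$ is bounded by Hölder and injective by faithfulness. The same injectivity is needed to pass from $T_2T_2^*\xi=\xi$ to the $L^1$ identity, and I will use it there as well.
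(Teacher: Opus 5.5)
Your proposal is correct and follows the paper's argument. The map $\xi\mapsto h_\omega^{1/4}\xi h_\omega^{1/4}$ is exactly the isometry identifying $L^2(\Me)$ with the Kosaki space $L^2(\Me\|\omega)$, where the paper views $T_*$ as a Hilbert-space contraction with adjoint $(T_\omega)_*$ and concludes by the same norm-equality argument for contractions.

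One point needs stating carefully. For a general $\omega$-preserving $S$, sandwiching intertwines $S_2$ with $(S_\omega)_*$, not $S_*$. So the identity $\Phi\circ T_2T_2^*=(T\circ T_\omega)_*\circ\Phi$ should come from composing $T_*\circ\Phi=\Phi_0\circ T_2^*$ with its counterpart $(T_\omega)_*\circ\Phi_0=\Phi\circ T_2$, which holds because $(T_\omega)_{\omega\circ T}=T$.
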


For a general $\alpha>1$, equality in \eqref{eq:dpi_sandwiched} is related to the case
$\alpha=2$ by the following result, relying on interpolation techniques.

\begin{lemma}\label{lemma:eq_dpi_sandwiched} \cite[Lem.~4.5]{jencova2018renyi} Let $\mu,\omega\in \Me_*^+$ and let
$\alpha_0>1$. If the equality
\[
\|T_*(h_\omega^{\frac{\alpha-1}{2\alpha}}h_\mu^{\frac1{\alpha}}h_\omega^{\frac{\alpha-1}{2\alpha}})\|_{\alpha,\omega\circ
T}=\|h_\omega^{\frac{\alpha-1}{2\alpha}}h_\mu^{\frac1{\alpha}}h_\omega^{\frac{\alpha-1}{2\alpha}}\|_{\alpha,\omega}
\]
holds for $\alpha=\alpha_0$, then it holds for all $\alpha>1$.

\end{lemma}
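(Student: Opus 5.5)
The plan is a complex interpolation argument in the Kosaki spaces. We reduce to strict convexity of the $\alpha$-norms and then use a Hadamard three-lines/maximum-principle argument. Fix $\mu,\omega$ and write $\omega_0=\omega\circ T$. Since $\omega$ need not be faithful, we first pass to the support corners as in the standing reductions, so that $\omega$ and $\omega_0$ are faithful. For $\alpha>1$ put $\theta=1/\alpha\in(0,1)$. Use Kosaki's realization of $L^\alpha(\Me\|\omega)$ as the complex interpolation space $C_\theta$ between $L^\infty$ (embedded as $h_\omega^{1/2}\Me h_\omega^{1/2}$) and $L^1(\Me)$. The vector $x_\theta:=h_\omega^{\frac{1-\theta}{2}}h_\mu^{\theta}h_\omega^{\frac{1-\theta}{2}}$ is then the value at $\theta$ of the analytic family
\[
f(z)=h_\omega^{\frac{1-z}{2}}\,h_\mu^{z}\,h_\omega^{\frac{1-z}{2}},\qquad 0\le \mathrm{Re}\, z\le 1,
\]
suitably normalized by $\mu(1)$. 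We have $\|f(it)\|_{\infty}\le 1$ and $\|f(1+it)\|_1\le \mu(1)$, so $\|x_\theta\|_{\alpha,\omega}=\mu(1)^\theta$ is attained exactly by this "extremal" family. Since $T_*$ maps the whole compatible couple contractively (by Proposition \ref{prop:contraction} at $p=\infty$ and $p=1$), $T_*\circ f$ is an admissible family for $T_*x_\theta$.

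Step 1 is to fix $\alpha_0$ with equality $\|T_*x_{\theta_0}\|_{\alpha_0,\omega_0}=\mu(1)^{\theta_0}$. Take a norming functional for $T_*x_{\theta_0}$ in the dual Kosaki space, realized as an analytic family $g(z)$ in the dual couple with boundary norms at most $1$. Consider the scalar bounded analytic function $F(z)=\langle g(z),T_*f(z)\rangle\,\mu(1)^{-z}$ on the strip. On both boundary lines $|F|\le 1$, and at the interior point $\theta_0$ we have $F(\theta_0)=1$. By the maximum principle $F\equiv 1$, so $|\langle g(\theta),T_*f(\theta)\rangle|=\mu(1)^{\theta}$ for every $\theta\in(0,1)$. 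Since $g(\theta)$ has dual norm at most $1$ by interpolation, this gives $\|T_*x_\theta\|_{\alpha,\omega_0}\ge\mu(1)^\theta=\|x_\theta\|_{\alpha,\omega}$. The reverse inequality is contractivity \eqref{eq:dpi_sandwiched}, so equality holds for all $\alpha=1/\theta>1$.

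The main obstacle is the existence of the dual extremal family $g$ with $\langle g(\theta_0),T_*x_{\theta_0}\rangle=\|T_*x_{\theta_0}\|$. I would construct it explicitly rather than appeal to abstract interpolation duality. Write $T_*x_{\theta_0}=h_{\omega_0}^{\frac{1-\theta_0}{2}}h_\nu^{\theta_0}h_{\omega_0}^{\frac{1-\theta_0}{2}}$ for some $\nu\in\Ne_*^+$; this is possible by the characterization of $\tilde D_\alpha<\infty$ recalled above. Then take $g(z)=h_\nu^{\frac{1-z}{1}}$-type powers, i.e.\ $g(z)=\nu(1)^{\cdots}h_\nu^{1-z}$ paired via $\Tr$ against $h_{\omega_0}^{\frac{1-z}{2}}(\cdot)h_{\omega_0}^{\frac{1-z}{2}}$-stripped elements, with the normalization $\nu(1)^{-(1-z)}$. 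Checking boundedness and boundary estimates for this family uses Hölder's inequality in the Haagerup $L^p$-spaces. One also needs a density/approximation step, replacing $h_\mu$ by bounded perturbations such as $h_\mu\le C h_\omega$ truncations, so that $F$ is genuinely bounded and continuous on the closed strip. I expect this technical verification to be the hardest part; the rest is the standard three-lines argument.
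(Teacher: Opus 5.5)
Your strategy is the right one. The paper gives no proof here and only cites \cite[Lem.~4.5]{jencova2018renyi}, whose proof is an interpolation argument of this type: the extremal family $f(z)=h_\omega^{(1-z)/2}h_\mu^zh_\omega^{(1-z)/2}$ is paired with a dual family built from the $\nu$ satisfying $T_*x_{\theta_0}=h_{\omega_0}^{(1-\theta_0)/2}h_\nu^{\theta_0}h_{\omega_0}^{(1-\theta_0)/2}$, and the hypothesis then just reads $\nu(1)=\mu(1)$. Two points need repair.

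\textbf{The support hypothesis.} Your opening reduction tacitly assumes $\supp(\mu)\le\supp(\omega)$. Compressing to $e=\supp(\omega)$ turns $h_\mu^z$ into $eh_\mu^ze$, which is not a power of a functional on $e\Me e$. So $\|x_\theta\|_{\alpha,\omega}=\|eh_\mu^\theta e\|_{1/\theta}$ can be $<\mu(1)^\theta$, and $F(\theta_0)=1$ is lost.

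The statement genuinely fails without this condition. Take $\Me=M_3$, $\Ne=M_2$, $\omega=\mathrm{diag}(w_1,w_2,0)$ and $T(a)=\mathrm{diag}(a_{11},a_{22},\phi(a))$ with $\phi$ a state. Then, with $y=eh_\mu^\theta e$, $\|T_*x_\theta\|_{\alpha,\omega\circ T}$ is the $1/\theta$-norm of the diagonal of $y$, while $\|x_\theta\|_{\alpha,\omega}=\|y\|_{1/\theta}$. Now let $h_\mu=\lambda_1P_1+\lambda_2P_2$, with $P_1,P_2$ the projections onto $(1,1,1)/\sqrt3$ and $(1,-1,0)/\sqrt2$. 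Then $y_{12}=\lambda_1^\theta/3-\lambda_2^\theta/2$. Choosing $(\lambda_1/\lambda_2)^{\theta_0}=3/2$, this vanishes only at $\theta_0$. Since pinching strictly decreases $\|\cdot\|_p$ on non-diagonal positive matrices, equality holds at $\theta_0$ and nowhere else. You should state $\supp(\mu)\le\supp(\omega)$ explicitly; it is implicit in the paper, where $h_\mu^{1/\alpha}$ is taken in $\supp(\omega)L^\alpha(\Me)\supp(\omega)$.

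\textbf{The analytic function $F$.} The step you postpone is where the proof lives, and your remedy does not work as stated. Replacing $\mu$ by approximants destroys the equality at $\theta_0$ and changes $\nu$, so the maximum principle cannot be run for the approximating functions. Approximation may only be used to show that the $F$ built from the original $\mu,\nu$ is analytic. Continuity on the closed strip is also not needed. Here is how to set it up:
\begin{itemize}
\item For $z=\theta+it$, write $b(z)=\bigl(h_\omega^{-it/2}(D\mu:D\omega)_t h_\omega^{it/2}\bigr)\bigl(h_\omega^{it/2}h_\mu^\theta h_\omega^{-it/2}\bigr)=h_\omega^{-it/2}h_\mu^zh_\omega^{-it/2}$. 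Its $L^{1/\theta}(\Me)$-norm is at most $\mu(1)^\theta$.
\item Then $f(z)=h_\omega^{(1-\theta)/2}b(z)h_\omega^{(1-\theta)/2}$ and $T_*f(z)=h_{\omega_0}^{(1-\theta)/2}(T_\omega)_{1/\theta}(b(z))h_{\omega_0}^{(1-\theta)/2}$.
\item With $c(z)=h_{\omega_0}^{it/2}h_\nu^{1-z}h_{\omega_0}^{it/2}$, whose norm in $L^{1/(1-\theta)}(\Ne)$ is at most $\nu(1)^{1-\theta}$, set $F(z)=\nu(1)^{z-1}\mu(1)^{-z}\Tr\bigl(c(z)\,(T_\omega)_{1/\theta}(b(z))\bigr)$. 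This is your pairing, written so that H\"older applies.
\item H\"older and Proposition~\ref{prop:contraction} give $|F|\le 1$ on the whole open strip, and $F(\theta_0)=(\nu(1)/\mu(1))^{\theta_0}=1$.
\end{itemize}
What remains is analyticity on the open strip. The uniform bound lets you obtain it as a locally uniform limit (Vitali) of analytic approximants. Then $F\equiv1$, and H\"older at real $\theta$ gives $\|(T_\omega)_{1/\theta}(h_\mu^\theta)\|_{1/\theta}\ge\mu(1)^\theta$, which is the claimed equality.
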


We now prove the following extension of \cite[Thm.~4.6]{jencova2018renyi} to NUP maps that
are not necessarily 2-positive.

\begin{theorem}\label{thm:sufficiency_sandwiched} Let $\rho,\omega\in \Me_*^+$ and let
$\alpha>1$ be such
that $\tilde D_\alpha (\rho\|\omega)<\infty$. Then 
$\tilde D_\alpha(\rho\|\omega)=\tilde D_\alpha(\rho\circ T\|\omega\circ T)$ if and only if
 $T$ is sufficient with respect to
$\{\rho,\omega\}$. 

\end{theorem}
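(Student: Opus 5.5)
The plan is to reduce equality for $\alpha$ to equality for $\alpha=2$, and then move sufficiency back from the auxiliary pair to $\{\rho,\omega\}$ using Lemma \ref{lemma:suff_alpha}.

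\emph{Easy direction.} If $T$ is sufficient with recovery map $S$, then applying the data processing inequality \eqref{eq:dpi_sandwiched} twice gives
\[
\tilde D_\alpha(\rho\|\omega)=\tilde D_\alpha(\rho\circ T\circ S\|\omega\circ T\circ S)\le \tilde D_\alpha(\rho\circ T\|\omega\circ T)\le \tilde D_\alpha(\rho\|\omega).
\]
Hence equality holds.

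\emph{Setup for the converse.} As in the discussion after the definition of sufficiency, we may assume $\omega$ is faithful and $\omega\circ T$ is faithful; otherwise we pass to support corners. This is harmless because $\tilde D_\alpha(\rho\|\omega)<\infty$ forces $\supp\rho\le\supp\omega$. Finiteness of $\tilde D_\alpha(\rho\|\omega)$ gives $\mu\in\Me_*^+$ with
\[
h_\rho=h_\omega^{\frac{\alpha-1}{2\alpha}}h_\mu^{\frac1\alpha}h_\omega^{\frac{\alpha-1}{2\alpha}} .
\]
Equality of the divergences (with $\rho(1)=\rho\circ T(1)$) is exactly equality of the $\|\cdot\|_{\alpha,\omega}$ and $\|\cdot\|_{\alpha,\omega\circ T}$ norms of $h_\rho$ and $T_*(h_\rho)$. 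That is the hypothesis of Lemma \ref{lemma:eq_dpi_sandwiched} at $\alpha_0=\alpha$.

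\emph{Key steps.} First, Lemma \ref{lemma:eq_dpi_sandwiched} gives the same norm equality for every $\alpha'>1$, in particular for $\alpha'=2$. Define $\rho_2\in\Me_*^+$ by
\[
h_{\rho_2}=h_\omega^{\frac14}h_\mu^{\frac12}h_\omega^{\frac14}.
\]
This is a positive element of $L^1(\Me)$ by H\"older, so $\rho_2$ is well defined. We have $\tilde Q_2(\rho_2\|\omega)=\mu(1)<\infty$. The norm equality at $\alpha'=2$ reads $\tilde Q_2(\rho_2\circ T\|\omega\circ T)=\tilde Q_2(\rho_2\|\omega)$, using that $T_*$ preserves total mass so the normalizations agree. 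Lemma \ref{lemma:sandwiched_2} then shows that $T$ is sufficient for $\{\rho_2,\omega\}$.

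Next, apply Lemma \ref{lemma:suff_alpha} with $p=2$, $q=2$, $r=1$ and the given $\mu$: sufficiency for $\{\rho_2,\omega\}$ is equivalent to sufficiency for $\{\mu,\omega\}$. Apply the same lemma again with $p=\alpha$ and $\frac1q=\frac{\alpha-1}{\alpha}$, so that $r=1$ and the associated state is exactly $\rho$. This turns sufficiency for $\{\mu,\omega\}$ into sufficiency for $\{\rho,\omega\}$.

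\emph{Main obstacle.} The genuine difficulty, removing 2-positivity, is already absorbed into Lemma \ref{lemma:suff_alpha}, which rests on Lemma \ref{lemma:lpJ}(d) for JW*-subalgebras. The remaining point I would check carefully is that Lemma \ref{lemma:eq_dpi_sandwiched} and Lemma \ref{lemma:sandwiched_2}, taken from \cite{jencova2018renyi}, use only positivity of $T$ and not 2-positivity. Both rest on interpolation and on contractivity of $T_*$ on the Kosaki spaces, which holds for NUP maps by Proposition \ref{prop:contraction} and \eqref{eq:tpq}. The Petz dual $T_\omega$ appearing in Lemma \ref{lemma:sandwiched_2} also exists for positive maps. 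I would also check that Lemma \ref{lemma:suff_alpha} applies with the unnormalized functionals $\mu$ and $\rho_2$, which the remark before it permits.
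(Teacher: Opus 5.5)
Your proposal is correct and follows the paper's proof essentially step for step. You compress to faithful $\omega$ and $\omega\circ T$, write $h_\rho$ via $\mu$, pass to $\alpha=2$ with Lemma \ref{lemma:eq_dpi_sandwiched}, get sufficiency for the pair with $h_{\rho_2}=h_\omega^{1/4}h_\mu^{1/2}h_\omega^{1/4}$ from Lemma \ref{lemma:sandwiched_2}, and transfer back to $\{\rho,\omega\}$ by two applications of Lemma \ref{lemma:suff_alpha} with exactly the parameters you give, while the converse follows from the data processing inequality as in the paper.
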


\begin{proof} 
Note that the assumption $\tilde D_{\alpha}(\rho\|\omega)<\infty$ implies that
$\mathrm{supp}(\rho)\le \mathrm{supp}(\omega)$. 
By restriction to the compressed algebras, we may assume that both $\omega$ and $\omega\circ T$ are  faithful. Moreover,
$h_\rho=h_\omega^{\frac{\alpha-1}{2\alpha}}h_\mu^{\frac1{\alpha}}h_\omega^{\frac{\alpha-1}{2\alpha}}$
for some $\mu\in \Me_*^+$. 

Assume  equality in the data processing inequality holds, that is,
$\|T^*h_\rho\|_{\alpha,\omega\circ T}=\|h_\rho\|_{\alpha,\omega}$.  By Lemma
\ref{lemma:eq_dpi_sandwiched}, this implies that also  
$\|T_*(h_\psi)\|_{2,\omega\circ
T}=\|h_\psi\|_{2,\omega}$, where 
$h_\psi=h_\omega^{\frac14}h_\mu^{\frac12}h_\omega^{\frac14}$. By Lemma
\ref{lemma:sandwiched_2}, this implies that $T$ is sufficient with respect to
$\{\psi,\omega\}$. The result follows by applying Lemma \ref{lemma:suff_alpha} (twice). 
The converse is clear from \eqref{eq:dpi_sandwiched}.

\end{proof}

\subsection{The $\alpha$-$z$-R\'enyi divergences}
\label{sec:recovery}

We are now ready to present the main results of the paper. As in \cite{hiai2024alphaz}, we
treat the cases $\alpha<1$ and $\alpha>1$ separately. In both cases, we relate the
equality in the data processing inequality \eqref{eq:dpi} to a similar equality for a
sandwiched R\'enyi divergence, and apply the previous paragraph to obtain sufficiency of the NUP map with respect to a modified pair of states. An application of Lemma
\ref{lemma:suff_alpha} then finishes the proof. 

\begin{theorem}\label{thm:alphale} Let $0 < \alpha < 1$ and $\max\{\alpha, 1 - \alpha\}
\le z$. Let $\rho,\omega \in  \Me_*^+$, $\rho\ne 0$, and assume either
that $\alpha < z$  and $\supp(\omega) \le \supp(\rho)$, or that $1 -\alpha < z$ and
$\supp(\rho) \le \supp(\omega)$. Then an NUP map $T : \Ne \to  \Me$  is sufficient  with
respect to $\{\rho, \omega\}$ if and only if
\[
D_{\alpha,z} (\rho \circ T\|\omega \circ T) = D_{\alpha,z} (\rho\|\omega). 
\]
\end{theorem}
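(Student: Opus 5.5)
The plan is to follow \cite{hiai2024alphaz} for $0<\alpha<1$ and use the JW*-machinery only where 2-positivity entered there. The easy direction is immediate. If $T$ is sufficient with recovery map $S$, then applying the data processing inequality \eqref{eq:dpi} to $T$ and then to $S$ gives
\[
D_{\alpha,z}(\rho\|\omega)=D_{\alpha,z}(\rho\circ T\circ S\|\omega\circ T\circ S)\le D_{\alpha,z}(\rho\circ T\|\omega\circ T)\le D_{\alpha,z}(\rho\|\omega).
\]
This is valid because $(\alpha,z)\in R_{\text{DPI}}$.

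For the converse, I first reduce to faithful states. I treat the case $1-\alpha<z$, $\supp(\rho)\le\supp(\omega)$. I compress to $\supp(\omega)$ and to the support of $\omega\circ T$, as in the proof of Theorem \ref{thm:sufficiency_sandwiched}, so that $\omega$ and $\omega_0=\omega\circ T$ may be taken faithful. The other case, $\alpha<z$ with $\supp(\omega)\le\supp(\rho)$, should follow by symmetry. Here I use $Q_{\alpha,z}(\rho\|\omega)=Q_{1-\alpha,z}(\omega\|\rho)$ for $\alpha<1$, with the roles of $\rho$ and $\omega$ exchanged; sufficiency for $\{\rho,\omega\}$ is symmetric in the pair.

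The core step is the variational representation of $Q_{\alpha,z}$ for $\alpha<1$ from \cite{hiai2024alphaz} (originating in \cite{kato2024onrenyi}). Writing $p=z/\alpha$ and $q=z/(1-\alpha)$, we have $1/p+1/q=1/z$, and $Q_{\alpha,z}(\rho\|\omega)=\|h_\rho^{\frac{1}{2p}}h_\omega^{\frac1q}h_\rho^{\frac1{2p}}\|_z^z$. The plan is to express this as an infimum, over positive invertible $a\in\Me$ (or positive $L^s$ elements), of a combination $\alpha\Tr(h_\rho^{\cdot}a\cdots)+(1-\alpha)\Tr(h_\omega^{\cdot}a^{-1}\cdots)$. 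In this form monotonicity under $T$ comes from the operator-convexity/Jensen-type inequality that \cite{hiai2024alphaz} proved for positive maps.

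Equality in the data processing inequality then forces the optimizer for $(\rho\circ T,\omega\circ T)$ to be transported by $T$ to an optimizer for $(\rho,\omega)$. The optimizer is an explicit function of the output states, and the equality case should give a sandwiched-type equality. Concretely, I aim to show that equality implies
\[
\tilde D_{\beta}(\psi\circ T\|\omega\circ T)=\tilde D_{\beta}(\psi\|\omega)
\]
for a suitable $\beta>1$, where $\psi$ is defined by $h_\psi^{1/r}=h_\omega^{\frac1{2q'}}h_\rho^{\frac1{p'}}h_\omega^{\frac1{2q'}}$ with suitable exponents, exactly as in \cite{hiai2024alphaz}. The condition $1-\alpha<z$, equivalently $q>1$, is what makes this exponent admissible.

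With that equality in hand, Theorem \ref{thm:sufficiency_sandwiched} gives sufficiency of $T$ for $\{\psi,\omega\}$. Lemma \ref{lemma:suff_alpha}, applied with $\mu=\rho$ (or applied twice if an intermediate state is needed), then transfers this to sufficiency for $\{\rho,\omega\}$.

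The main obstacle is the middle step, extracting the sandwiched equality from equality of $Q_{\alpha,z}$. In \cite{hiai2024alphaz} this step uses the Schwarz inequality for 2-positive maps, namely $T(a)^*T(a)\le T(a^*a)$. For merely positive maps I would replace it by Kadison's inequality $T(a)^2\le T(a^2)$ for self-adjoint $a$, or its $L^p$-version through the contraction $T_p$ of Proposition \ref{prop:contraction}. Since the optimizers are positive, only self-adjoint elements should be needed. I expect this to be the delicate point, together with checking that the relevant elements lie in $L^p(\Je,\omega)$ so that Lemma \ref{lemma:lpJ}\ref{it:lpJ4} applies.
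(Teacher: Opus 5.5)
Your plan is essentially the paper's proof. The paper treats the mirror case ($\alpha<z$, $\supp(\omega)\le\supp(\rho)$), takes recovery $\sigma\circ T\circ T_\rho=\sigma$ for a derived state directly from the proof of \cite[Thm.~4.5]{hiai2024alphaz} rather than detouring through a sandwiched equality and Theorem \ref{thm:sufficiency_sandwiched}, and then transfers back to $\{\rho,\omega\}$ by applying Lemma \ref{lemma:suff_alpha} twice. The obstacle you anticipate is not there: according to the paper, that step of \cite{hiai2024alphaz} uses only contractivity of the maps $T_p$ and Choi's inequality $T(a)^{-1}\le T(a^{-1})$, both valid for unital positive maps, so no Schwarz or Kadison substitute is needed; the 2-positivity was needed for the transfer back to $\{\rho,\omega\}$, and Lemma \ref{lemma:suff_alpha} replaces exactly that step, with the $L^p(\Je,\omega)$ membership check already built into it.
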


\begin{proof} We will  assume that $\alpha < z$  and $\supp(\omega) \le \supp(\rho)$, the
other case will follow by exchanging the roles of $\rho$ and $\omega$, together with the
equality ${Q}_{\alpha,z}(\rho\|\omega)={Q}_{1-\alpha,z}(\omega\|\rho)$. 
As before, we may
assume that both $\rho$ and $\rho\circ T$ are faithful. Let us denote
$p:=\frac{z}{\alpha}$, $r:=\frac{z}{1-\alpha}$, so that we have $\frac1p+\frac1r=\frac1z$
and $p>1$ by the assumption. Let also $\mu,\sigma\in \Me_*^+$ be such that 
\[
h_\mu^{\frac1z}=h_\rho^{\frac1{2p}}h_\omega^{\frac1r}h_\rho^{\frac1{2p}},\qquad h_\sigma=
h_\rho^{\frac{p-1}{2p}}h_\mu^{\frac1p}h_\rho^{\frac{p-1}{2p}}.
\]
We then have
\[
Q_{\alpha,z}(\rho\|\omega)=\Tr\left(
h_\omega^{\frac1{2r}}h_\rho^{\frac1p}h_\omega^{\frac1{2r}}\right)^z=\Tr
\left(h_{\rho}^{\frac1{2p}}h_\omega^{\frac1r}h_\rho^{\frac1{2p}}\right)^z=\Tr
h_\mu=\|h_\mu^{\frac1p}\|_p^p=Q_{p,p}(\sigma\|\rho).
\]
It is shown in the proof of \cite[Thm.~4.5]{hiai2024alphaz} that if an NUP map $T$
satisfies the equality given in the theorem, then we have $\sigma\circ T\circ T_\rho=\sigma$. Note that
the proof of this fact is based on a variational formula for $Q_{\alpha,z}$ and uniform convexity
of  $L^p(\Me)$ with $p>1$, and it only uses contractivity of the maps $T_p$ and the Choi
inequality for $T$, in particular, 2-positivity of $T$ was not needed at this point.  
Hence we obtain that $T$ is sufficient with respect to the pair $\{\sigma,\rho\}$. Using
the definitions of $\sigma$ and $\mu$ and applying Lemma \ref{lemma:suff_alpha} (twice),
we obtain that $T$ is sufficient with respect to the original pair $\{\rho,\omega\}$.

\end{proof}

\begin{theorem}\label{thm:alphagr} Let $\alpha > 1$ and $\max\{\alpha/2, \alpha - 1\} \le
z\le \alpha$, and assume further that $\alpha < z + 1$. Let  $\rho,\omega \in \Me_*^+$
be such that $D_{\alpha,z} (\rho\|\omega) < \infty$. 
Then an NUP map $T : \Ne \to  \Me$  is sufficient  with
respect to $\{\rho, \omega\}$ if and only if
\[
D_{\alpha,z} (\rho \circ T\|\omega \circ T) = D_{\alpha,z} (\rho\|\omega). 
\]

\end{theorem}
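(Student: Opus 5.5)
The plan mirrors the proof of Theorem \ref{thm:alphale}: reduce equality in the DPI for $D_{\alpha,z}$ to equality for a sandwiched R\'enyi divergence of a modified pair, apply Theorem \ref{thm:sufficiency_sandwiched}, and transfer sufficiency back with Lemma \ref{lemma:suff_alpha}. The converse direction is immediate from the data processing inequality \eqref{eq:dpi} applied to $T$ and to a recovery map. Finiteness of $D_{\alpha,z}(\rho\|\omega)$ gives $\supp(\rho)\le\supp(\omega)$. After the standard compression we may therefore assume that $\omega$ and $\omega\circ T$ are faithful. By the definition of $Q_{\alpha,z}$ for $\alpha>1$ we write
\[
h_\rho^{\frac{\alpha}{z}}=h_\omega^{\frac{\alpha-1}{2z}}\,x\,h_\omega^{\frac{\alpha-1}{2z}},\qquad x\in L^z(\Me)^+,
\]
so that $x=h_\mu^{\frac1z}$ for some $\mu\in\Me_*^+$ and $Q_{\alpha,z}(\rho\|\omega)=\mu(1)$.

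Next introduce exponents $p:=\frac{z}{\alpha}\le 1$ and $q:=\frac{z}{\alpha-1}$, so that $\frac1z=\frac1p-\frac1q$. Under $\alpha<z+1$ we have $q>1$. The identity above says that $\mu$ is obtained from $\rho$ and $\omega$ by a ``reverse sandwich''. It also rewrites as
\[
h_\rho^{\frac1p}=h_\omega^{\frac1{2q}}h_\mu^{\frac1z}h_\omega^{\frac1{2q}},
\]
which has exactly the form of Lemma \ref{lemma:suff_alpha} with the pair $\{\mu,\omega\}$. That lemma gives: $T$ is sufficient for $\{\rho,\omega\}$ if and only if it is sufficient for $\{\mu,\omega\}$.

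So it suffices to show that equality in the DPI for $D_{\alpha,z}$ forces sufficiency for $\{\mu,\omega\}$ (or for some further transform, again handled by Lemma \ref{lemma:suff_alpha}). For this I would follow the proof of \cite[Thm.~5.?]{hiai2024alphaz} for $\alpha>1$. That proof uses the variational formula
\[
Q_{\alpha,z}(\rho\|\omega)=\sup_{a}\bigl\{\alpha\Tr(\ldots)-(\alpha-1)\Tr(\ldots)\bigr\}
\]
with $a\in\Me^+$ (or its $L^p$ version). It also uses that $T_*$ and the maps $T_p$ are contractions (Proposition \ref{prop:contraction}), together with the Choi/Kadison inequality $T(a)^2\le T(a^2)$, which holds for merely positive unital maps. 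From this it derives that equality implies an equality of sandwiched type: $\tilde D_{\beta}(\sigma\circ T\|\omega\circ T)=\tilde D_\beta(\sigma\|\omega)$ for a suitable $\beta>1$ and a state $\sigma$ built from $\mu$ and $\omega$ by a sandwich $h_\sigma=h_\omega^{\frac{\beta-1}{2\beta}}h_\mu^{\frac1\beta}h_\omega^{\frac{\beta-1}{2\beta}}$. For $z=\alpha$ this is trivial, and in general one expects $\beta=z$ or $\beta=q/(q-1)$. Theorem \ref{thm:sufficiency_sandwiched} then gives sufficiency for $\{\sigma,\omega\}$, and Lemma \ref{lemma:suff_alpha}, applied once more, gives sufficiency for $\{\mu,\omega\}$ and hence for $\{\rho,\omega\}$.

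The main obstacle is the step extracting the sandwiched equality from the $\alpha$-$z$ equality without 2-positivity. In \cite{hiai2024alphaz}, the places where 2-positivity enters are exactly those where one needs the range of a conditional expectation or a multiplicative domain to be an algebra, or $L^p$-spaces over a subalgebra. I would first check carefully that the optimizer attaining the supremum in the variational formula, $a\mapsto T(a)$, yields an element whose image lies in $L^{q}$ (uniqueness there uses uniform convexity of $L^q(\Me)$, $q>1$, which needs $\alpha<z+1$). I would then check that equality of norms propagates using only contractivity and the Choi inequality. Wherever an algebra property would be used, I would substitute Lemma \ref{lemma:lpJ}, in particular item \ref{it:lpJ4}, for the minimal sufficient JW*-subalgebra.
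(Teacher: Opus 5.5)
Your proposal follows essentially the paper's proof. It uses the same exponents $p=\frac z\alpha$ and $q=\frac z{\alpha-1}$ and the same auxiliary $\mu$. It imports the key step from the $\alpha>1$ reversibility argument of \cite[Thm.~4.7]{hiai2024alphaz}; the paper likewise only observes that this step does not use 2-positivity. It then transfers sufficiency back with Lemma \ref{lemma:suff_alpha} applied twice.

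One small correction concerns the exponent of the auxiliary state. The paper uses $\beta=q$, i.e.\ $h_\sigma=h_\omega^{\frac{q-1}{2q}}h_\mu^{\frac1q}h_\omega^{\frac{q-1}{2q}}$, and obtains $\sigma\circ T\circ T_\omega=\sigma$ directly. Your guess $\beta=z$ can fail to be $>1$; for example $\alpha=\frac32$, $z=\frac34$ is allowed. This is harmless: by Lemma \ref{lemma:eq_dpi_sandwiched} and Lemma \ref{lemma:suff_alpha}, the exact choice of $\beta>1$ does not matter once one sandwiched equality is obtained.
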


\begin{proof} Let us put $p:=\frac{z}{\alpha}$ and $q:=\frac{z}{\alpha-1}$, so that this
time we have $p\in [1/2,1]$ and $q>1$.
From the assumption 
$D_{\alpha,z} (\rho\|\omega) < \infty$, it follows that $\supp(\rho) \le \supp(\omega)$,
so that we may again assume that $\omega$ and $\omega\circ T$ are both faithful.  Further,
there is some element $x\in L^z(\Me)$ such that
$h_\rho^\frac1p=h_\omega^{\frac1{2q}}xh_\omega^{\frac1{2q}}$, and it is easy to see that
such $x$ is necessarily positive (and unique), so that $x=h_\mu^{\frac1z}$ for some $\mu\in
\Me_*^+$. Let $\sigma\in \Me_*^+$ be such that
\[
h_\sigma=h_\omega^{\frac{q-1}{2q}}h_\mu^{\frac1q}h_\omega^{\frac{q-1}{2q}},
\]
then we have
\[
Q_{\alpha,z}(\rho\|\omega)=\|x\|_z^z=\mu(1)=Q_{q,q}(\sigma\|\omega).
\]
As shown in the proof of \cite[Thm.~4.7]{hiai2024alphaz}, equality in the data processing
inequality implies that $\sigma\circ T\circ T_\omega=\sigma$, and again this fact is
proved  without using the  assumption that $T$ is 2-positive. Hence $T$ is sufficient with
respect to $\{\sigma,\omega\}$. An application of Lemma \ref{lemma:suff_alpha} proves the
result.

\end{proof}

\begin{remark}\label{rem:region}  Let us note that the extra conditions on the parameter
values in Theorems \ref{thm:alphale} and \ref{thm:alphagr} determining the region
$R_{\text{rec}}$ within $R_{\text{DPI}}$ are essential and cannot be
dropped. Indeed, this follows from the examples known already in the finite dimensional
case:

In the case $\alpha<1$, \cite[Rem.~5.15]{hiai2017different} gives an example where 
$\supp(\rho)\le \supp(\sigma)$ and $T$ is a NUCP map, such that $T$ is not sufficient with
respect to $\{\rho,\sigma\}$, but the equality 
$D_{\alpha,1-\alpha}(\rho\|\sigma)=D_{\alpha,1-\alpha}(\rho\circ T\|\sigma\circ T)$ holds for any 
$\alpha\in (0,1)$. This shows that the condition $1-\alpha<z$ is necessary when
$\supp(\rho)\le \supp(\sigma)$. Necessity of the condition that $\alpha<z$ when $\supp(\sigma)\le
\supp(\rho)$ also follows, by exchanging $\rho\leftrightarrow\sigma$ and
$\alpha\leftrightarrow  1-\alpha$. If both states have the same support, the only
pair $(\alpha,z)$  forbidden by Theorem \ref{thm:alphale} is $\alpha=z=\frac12$. It was pointed out
already in
\cite[Appendix A.9]{mosonyi2015quantum} that Theorem \ref{thm:alphale} cannot be extended
to this case. 

In the case $\alpha>1$, the condition $\alpha< z+1$ is again necessary. In
\cite[Ex.~4.8]{hiai2017different}, an example of states with $\supp(\rho)\le
\supp(\sigma)$ and a NUCP map $T$ not sufficient with respect to $\{\rho,\sigma\}$ is given, such that 
$D_{\alpha,\alpha-1}(\rho\|\sigma)=D_{\alpha,\alpha-1}(\rho\circ T\|\sigma\circ T)$ holds for
$\alpha>1$. The limiting case $\lim_{\alpha\to \infty}\tilde D_\alpha$ was also excluded
in \cite[Appendix A.9]{mosonyi2015quantum}. 

\end{remark}

\printbibliography
\end{document}